\documentclass[11pt,a4paper]{article}
\usepackage{geometry}
\geometry{
 a4paper,
 left=30mm,
 top=20mm,
 right=30mm}

\usepackage{amsthm}
\usepackage{amsfonts,colortbl}
\usepackage{amsmath,bm,hyperref,multirow}
\usepackage{graphicx,subcaption}
\usepackage{tikz,hyperref}
\usetikzlibrary{fit,calc,shapes,backgrounds,matrix,arrows}

\newtheorem{theorem}{Theorem}
\newtheorem{proposition}[theorem]{Proposition}
\newtheorem{example}[theorem]{Example}
\newtheorem{definition}[theorem]{Definition}

\newtheorem{remark}[theorem]{Remark}

\newtheorem{corollary}[theorem]{Corollary}

\newtheorem{property}[theorem]{Property}

\newcommand{\argmax}{\mathrm{argmax}}

\newcommand{\TN}{\mathrm{2TN}}
\newcommand{\lwTN}{\ensuremath{(l,w)-\mathrm{2TN}}}
\newcommand{\lwMMN}{\ensuremath{(l,w)-\mathrm{MMN}}}
\newcommand{\wlMMN}{\ensuremath{(w,l)-\mathrm{MMN}}}

\newcommand{\Rel}{\mathrm{Rel}}
\newcommand{\ApRel}{\mathrm{ApRel}} 

\newcommand{\Nss}{\bm{\mathrm{G}}}

\newcommand{\Hs}{\ensuremath{\bm{\mathrm{H}}}}

\newcommand{\PoS}{\bm{\mathrm{PoS}}}
\newcommand{\SoP}{\bm{\mathrm{SoP}}}

\title{\textbf{Generalized Convexity Properties and Shape Based Approximation in Networks Reliability}}
\author{Gabriela Cristescu$^1$ \and Vlad-Florin Dr\u{a}goi$^1$ \and Sorin-Hora\c{t}iu Hoar\u{a}$^{1,2}$}
\date{\scriptsize $^1$ Faculty of Exact Sciences, Aurel Vlaicu University of Arad, Romania\\
$^2$ Department of Computers and Information Technology, Polytechnic University of Timisoara}

\begin{document}

\thispagestyle{plain}

\maketitle
\begin{abstract}
Some properties of generalized convexity for sets and for functions are identified in case of the reliability polynomials of two dual minimal networks. A method of approximating the reliability polynomials of two dual minimal network is developed based on their mutual complementarity properties. The approximating objects are from the class of quadratic spline functions, constructed based both on interpolation conditions and on shape knowledge. It is proved that the approximant objects preserve the shape properties of the exact reliability polynomials. Numerical examples and simulations show the performance of the algorithm, both in terms of low complexity, small error and shape preserving. Possibilities of increasing the accuracy of approximation are discussed. 
\end{abstract}

%\vspace{1.5cc}
\section{Introduction}
%\end{center}

From the earliest days of network reliability, researchers have tried to develop algorithms to efficiently compute the reliability of graphs/networks. In a recent survey paper \cite{BCCGM2020}, Brown et \textit{al.} have rediscovered the way of this research domain, by putting into light some of the theoretical advances made in the past, as well as the new directions. This scientific adventure started with the work of Moore and Shannon \cite{1956_MS_1,1956_MS_2} and von Neumann \cite{1952_vN}, when the first foundations of the field were settled. In the late seventies \cite{1979_V}, Valiant demonstrated that the main computational problem, i.e., to compute the reliability polynomial of a two-terminal network, is \#P-complete. Hence, when the graph parameters are growing significantly, one has to find alternative methods for estimating reliability, such as i) applying simplifications in order to reduce the computations as much as possible so that the algorithm becomes practically effective, ii) bounding the reliability polynomial (using combinatorial methods and/or structural properties) iii) approximating the reliability polynomial such that the error of approximation is bounded by a relatively small quantity. 

The less investigated topics related to network reliability are the analytical properties such as shape properties of the reliability polynomials, including convexity, the number of real roots and their density, etc. In order to compensate the complexity problems of computing the coefficients of the reliability polynomial of a two-terminal network, the authors in \cite{CD2020,CD2021} proposed to approximate the polynomials using structural properties of the networks. In particular, duality is a characteristic that induces complementary properties on the coefficients, which are considered in the approximations. In \cite{DJ2020} Hermite interpolation is used for hammock networks based on previous results on the shape \cite{DJ2019}. Cubic splines are proposed in \cite{CD2020,CD2021}, and are suitable for any two-terminal networks. In \cite{CD2021} two methods of producing cubic splines are compared, Lagrange-type interpolation procedures and Bernstein approximation operator, emphasizing the accuracy of the methods. There are several advantages for taking duality into account in the context of approximations, such as
\begin{itemize}
    \item Computing the first non-trivial (different from zero) coefficient of the reliability polynomial of a network enables one to directly obtain the value of the last non-trivial (different from the binomial coefficient) coefficient of the reliability polynomial of the dual network;
    \item Adjusting approximated coefficients of a network can be done more efficiently when duality is considered, as more information is taken into account;
    \item The error of simultaneous approximation of two dual networks can be more accurately estimated when compared to a single network approximation.  
\end{itemize}

\paragraph{Our contribution}
The mutual behaviour of two dual networks from the point of view of their reliability provides us with additional information, which is used as input in the construction of the algorithms from \cite{CD2020,CD2021}. In this paper, we refine the approximation technique by considering the shape properties of the reliability polynomials of two dual two-terminal networks. A profound research on the shape of the reliability polynomials of two dual two-terminal networks is presented, starting from their complementarity properties \cite{CD2020,CD2021}. Their mutual behaviour referring to high order convexity properties, tangents properties, inflection points, etc. are emphasized. We construct approximation operators that preserve as many as possible shape properties. The use of the quadratic spline functions allows us to keep control on the approximation process from the point of view of shape preserving.

\paragraph{Outline of the article}
Allover the paper $\mathbb{N}$ denotes the set of natural numbers and $\mathbb{R}$ means the set of real numbers. A two terminal network is referred as 2TN and a matchstick minimal network is referred as MMN all over the paper.\\
In Section \ref{sec:net-reliab}, we describe the types of networks implied in our research and introduce the main definitions and properties regarding their reliability. Section \ref{sec:Shape-Prop} contains results referring to high order convexity properties of the reliability polynomial of a MMN and the manner, in which these properties are transferred to the dual network. Some extremum properties of the coefficients function are discussed. An algorithm for simultaneous approximation of the reliability polynomials of two dual network is described in Section \ref{sec:Approx-algo} based on quadratic spline functions. The approximant functions are constructed to preserve, as mush as possible, the shape properties of the reliability polynomials of MMNs. In Section \ref{sec:5}, we simulate the approximation technique using the new algorithm. The shape properties of the approximant objects are emphasized, proving the performance of the algorithm from shape preserving point of view. We conclude the article in Section \ref{sec:conclusion}. 

\section{Preliminaries on network reliability}\label{sec:net-reliab}
\subsection{Matchstick minimal two-terminal networks}\label{subsec:MMNs}
Any network $\Nss$ made of $n$ identical devices, that has two distinguished terminals: a source $S$, and a terminus $T$ is called two terminal network (denoted by $\TN$ in the sequel). $\Nss$ can be characterized at least by three parameters: {\it width} ($w$), {\it length} ($l$), and {\it size} $(n)$ - $w$ is the size of a "minimal cut" separating $S$ from $T$, $l$ is the size of a "minimal path" from $S$ to $T.$ In general, we have $n\ge wl$ (see \cite{1956_MS_1}). Any $\TN$ $\Nss$ of width $w$ and length $l$ will be called a $\lwTN$, or $\TN$ of type $(l,w)$. All functions, as reliability polynomial, its coefficients function and various approximations, related to a $\lwTN$ will be denoted by a character having the index $(l,w).$ Any $\TN$ satisfying $n=wl$ is minimal (see \cite{1956_MS_1}), the members of the family of minimal $\TN$s being denoted by MMN.
 
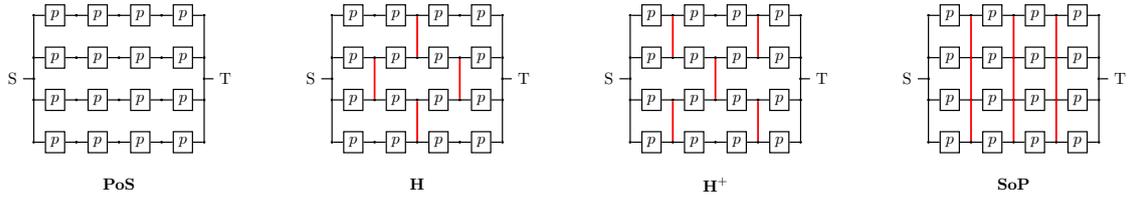
\begin{figure}[!ht]
\begin{center}
\resizebox{\textwidth}{!}{
\begin{tikzpicture}[thick]
\tikzset{
phase/.style = {draw,fill,shape=circle,minimum size=1pt,inner sep=0pt},
dev/.style={draw,shape=rectangle,minimum width =0.05 cm},
}

%PoS
\node (S5) at (-8-7,-5) {S};
\node (T5) at (-3-7,-5) {T};

\node[phase] (y1) at (-7.5-7,-5){}; 
\node[phase] (y2) at (-3.5-7,-5){};

\draw[thick] (S5) -- (y1);
\draw[thick] (T5) -- (y2);
\foreach \j in {1,2,3,4}{
	\foreach \k in {1,2,3,4,5}{
		\node[phase] (x\k\j) at (\k-8.5-7,-7.5+\j){};
    }
}

\draw[thick] (x11) -- (x14);
\draw[thick] (x51) -- (x54);
	\foreach \j in {1,2,3,4} 
    {
    	\foreach \k in {1,2,3,4}{
			\node[dev] (w\k\j) at (\k-8-7,-7.5+\j) {$p$};
		}
    }
 \foreach \i in {1,2,3,4}{   
\draw[thick] (x1\i) -- (w1\i) --(w2\i) -- (w3\i) --(w4\i) -- (x5\i);
}
 \node(c) at (-5.5-7,-7.5) {$\PoS$};
 
 %Hammock 
\node (S5) at (-8,-5) {S};
\node (T5) at (-3,-5) {T};

\node[phase] (y1) at (-7.5,-5){}; 
\node[phase] (y2) at (-3.5,-5){};

\draw[thick] (S5) -- (y1);
\draw[thick] (T5) -- (y2);
\foreach \j in {1,2,3,4}{
	\foreach \k in {1,2,3,4,5}{
		\node[phase] (x\k\j) at (\k-8.5,-7.5+\j){};
    }
}

\draw[thick] (x11) -- (x14);
\draw[thick] (x51) -- (x54);
	\foreach \j in {1,2,3,4} 
    {
    	\foreach \k in {1,2,3,4}{
			\node[dev] (w\k\j) at (\k-8,-7.5+\j) {$p$};
		}
    }
 \foreach \i in {1,2,3,4}{   
\draw[thick] (x1\i) -- (w1\i) --(w2\i) -- (w3\i) --(w4\i) -- (x5\i);
}
    
\draw[very thick,color=red] (x31) -- (x32);
\draw[very thick,color=red] (x33) -- (x34);

\draw[very thick,color=red] (x22) -- (x23);
\draw[very thick,color=red] (x42) -- (x43);

\node(c) at (-5.5,-7.5) {$\Hs$};

%HAmmock 

\node (S5) at (-1,-5) {S};
\node (T5) at (4,-5) {T};
\node[phase] (y1) at (-0.5,-5){}; 
\node[phase] (y2) at (3.5,-5){};

\draw[thick] (S5) -- (y1);
\draw[thick] (T5) -- (y2);

\foreach \j in {1,2,3,4}{
	\foreach \k in {1,2,3,4,5}{
		\node[phase] (x\k\j) at (\k-1.5,-7.5+\j){};
    }
}

\draw[thick] (x11) -- (x14);
\draw[thick] (x51) -- (x54);
	\foreach \j in {1,2,3,4} 
    {
    	\foreach \k in {1,2,3,4}{
			\node[dev] (w\k\j) at (\k-1,-7.5+\j) {$p$};
		}
    }    
   \foreach \i in {1,2,3,4}{   
\draw[thick] (x1\i) -- (w1\i) --(w2\i) -- (w3\i) --(w4\i) -- (x5\i);
}  

\draw[very thick,color=red] (x21) -- (x22);
\draw[very thick,color=red] (x23) -- (x24);

\draw[very thick,color=red] (x43) -- (x44);

\draw[very thick,color=red] (x32) -- (x33);
\draw[very thick,color=red] (x41) -- (x42);

\node(c) at (1.5,-7.5) {$\Hs^{+}$};

%SoP
\node (S5) at (-8+14,-5) {S};
\node (T5) at (-3+14,-5) {T};

\node[phase] (y1) at (-7.5+14,-5){}; 
\node[phase] (y2) at (-3.5+14,-5){};

\draw[thick] (S5) -- (y1);
\draw[thick] (T5) -- (y2);
\foreach \j in {1,2,3,4}{
	\foreach \k in {1,2,3,4,5}{
		\node[phase] (x\k\j) at (\k-8.5+14,-7.5+\j){};
    }
}

\draw[thick] (x11) -- (x14);
\draw[thick] (x51) -- (x54);
	\foreach \j in {1,2,3,4} 
    {
    	\foreach \k in {1,2,3,4}{
			\node[dev] (w\k\j) at (\k-8+14,-7.5+\j) {$p$};
		}
    }
 \foreach \i in {1,2,3,4}{   
\draw[thick] (x1\i) -- (w1\i) --(w2\i) -- (w3\i) --(w4\i) -- (x5\i);
}
\draw[very thick,color=red] (x41) -- (x44);
\draw[very thick,color=red] (x31) -- (x34);
\draw[very thick,color=red] (x21) -- (x24);
 \node(c) at (-5.5+14,-7.5) {$\SoP$};
 
\end{tikzpicture}
   }
    \end{center}
    \caption{Square $4$-by-$4$ parallel-of-series, hammocks and series-of-parallel.}\label{fig:Hammock}
    \end{figure}

%\begin{definition}
\paragraph{Matchstick Minimal Networks}
Let $l$ and $w$ be two strictly positive integers. A $\TN$ $\Nss$ is a MMN if and only if it can be designed in one of the following two ways. Either start by a parallel-of-series ($\PoS$) of width $w$ and length $l$ and place vertical matchsticks arbitrarily; or start with a series-of-parallel ($\SoP$) of width $w$ and length $l$ and remove vertical matchsticks arbitrarily.
%\end{definition}

Another way of defining a MMN, described in \cite{DB19}, is by using the bijection between the set of all MMNs of length $l$ and width $w$ and the set of all binary matrices $M_{\Nss}\in \mathcal{M}_{(l-1)\times(w-1)}{\{0,1\}}$. At any $\lwMMN$ $\Nss$ we associate its matchstick incidence matrix $M_{\Nss}\in \mathcal{M}_{(l-1)\times(w-1)}{\{0,1\}}$, as
\begin{itemize}
\item $M_{\Nss}(i,j)=1$ if there is a matchstick at position $(i,j)$;
\item $M_{\Nss}(i,j)=0$ if there is no matchstick at position $(i,j)$.
\end{itemize}   
%Two particular classes of MMNs, that were intensively studied are hammock and composition of series and parallel networks.
\paragraph{Hammock networks} MMNs presenting a ``brick-wall'' pattern are known as hammocks \cite{1956_MS_1, 1956_MS_2, 2018_CBDP,BD2020}. Starting from a $\SoP$, by alternately deleting matchsticks, one can construct a hammock. If $w$ and $l$ are both even there are two hammock ($\Hs$ and $\Hs^{+}$), while otherwise only one hammock exists $\Hs.$
Using the matchstick incidence matrix we have $M_{\PoS}=\bm{0}_{(w-1)\times(l-1)}$ and $M_{\SoP}=\bm{1}_{(w-1)\times(l-1)}$ (see Figure \ref{fig:Hammock}).

\paragraph{\textbf{Duality properties}}
Let $\Nss$ be a MMN. The dual of $\Nss$, denoted here by $\Nss^{\bot}$, was introduced in \cite{1956_MS_1}.  %and computed exactly in case of small size hammocks in \cite{2018_CBDP}. Interesting 
Duality properties were proved in \cite{DJ2019,CD2020} in case of particular MMNs, such as hammocks. Some duality properties, that are needed in the context of approximations (as mentioned in \cite{CD2020}), are recalled in the next subsection. Let us denote by $\bm{1}_{l\times w}$ the all-ones matrix, and the bit-wise complement of a binary matrix $M_{\Nss}\in \mathcal{M}_{(l-1)\times(w-1)}{\{0,1\}}$ as
\begin{equation}
\overline M_{\Nss}=\bm{1}_{(l-1)\times (w-1)}\oplus M_{\Nss},
\end{equation} 
 where $\oplus$ denotes the XOR operation .
\begin{theorem}[\cite{DB19}]\label{thm:dual-mmn}
Let $\Nss$ be a $\lwMMN$. Then either $l=1$ ($\Nss$ being the all parallel network) and we have $\Nss^{\bot}$ is the all series network, or $w,l\ge 2$ and we have $M_{\Nss^{\bot}}=\left(\overline M_{\Nss}\right)^{t}.$
\end{theorem}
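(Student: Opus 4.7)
The plan is to treat the two cases separately, leveraging both the matchstick incidence matrix characterization and the original Moore-Shannon construction of the dual. I would first dispose of the degenerate case $l=1$: a $(1,w)$-MMN is simply $w$ devices wired in parallel, and the Moore-Shannon dual exchanges parallel and series, yielding an all-series network on $w$ devices (a $(w,1)$-MMN). No matrix argument is needed here since the associated matchstick matrices have an empty dimension, so the identity $M_{\Nss^{\bot}}=(\overline{M_{\Nss}})^{t}$ would be vacuous anyway.

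For the main case $w, l \geq 2$, I would organise the argument around two observations. The first is a counting observation: duality exchanges minimal cuts and minimal paths, so a $(l,w)$-MMN dualises to a $(w,l)$-MMN, and consequently $M_{\Nss^{\bot}}$ lives in $\mathcal{M}_{(w-1)\times(l-1)}{\{0,1\}}$, matching the shape of $(\overline{M_{\Nss}})^{t}$. The second is a local identification: a matchstick placed at position $(i,j)$ in $\Nss$ imposes a short between two vertically adjacent devices, which under duality (a 90-degree rotation combined with the parallel--series swap) corresponds precisely to the absence of a matchstick at the transposed position in $\Nss^{\bot}$. Conversely, the absence of a matchstick in $\Nss$ dualises to the presence of a matchstick in $\Nss^{\bot}$.

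To make this local identification rigorous, I would exploit the two complementary constructions of an MMN recalled in the preliminaries: build $\Nss$ from the $\PoS$ of size $l\times w$ by inserting matchsticks at the positions marked by $M_{\Nss}$, and build $\Nss^{\bot}$ from the $\SoP$ of size $w\times l$ (which is the Moore-Shannon dual of that $\PoS$) by removing matchsticks. The key intermediate lemma is that the Moore-Shannon dual commutes with local matchstick edits in the transposed position, which I would verify by a direct planar duality computation on a single interior $2\times 2$ cell. Once this lemma is available, a straightforward induction on the number of matchsticks yields $M_{\Nss^{\bot}}(j,i) = 1 - M_{\Nss}(i,j)$, i.e. $M_{\Nss^{\bot}}=(\overline{M_{\Nss}})^{t}$. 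The main obstacle is precisely this local commutation claim: the bookkeeping for boundary effects and the correct correspondence of coordinates under the rotation is delicate, but it is inherent to the bijection of \cite{DB19} on which the theorem relies, and once it is cleanly stated the rest of the proof is combinatorial.
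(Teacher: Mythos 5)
This theorem is imported verbatim from \cite{DB19}; the paper you are being compared against gives no proof of it at all, so there is no internal argument to measure your attempt against --- only the external reference. Judged on its own terms, your outline follows what is essentially the only natural route: dispose of the degenerate $l=1$ case directly, observe that duality swaps minimal paths and minimal cuts so the matrix shapes match, and then reduce the identity $M_{\Nss^{\bot}}=\left(\overline M_{\Nss}\right)^{t}$ to a local contraction/deletion correspondence under planar duality (a matchstick is a contraction of a potential edge, its absence a deletion, and contraction in the primal is deletion in the planar dual), followed by induction on the number of matchsticks starting from the base pair $M_{\PoS}=\bm{0}$, $M_{\SoP}=\bm{1}$. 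That skeleton is sound, and the induction is legitimate because the bijection guarantees every intermediate matrix corresponds to an MMN.

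The genuine gap is that the one step carrying all the content --- your ``local commutation'' lemma, i.e.\ that a matchstick at position $(i,j)$ in $\Nss$ corresponds under the Moore--Shannon dual to the \emph{absence} of a matchstick at position $(j,i)$ in $\Nss^{\bot}$ --- is asserted rather than proved, and you explicitly defer it to ``the bijection of \cite{DB19}'', which is the very result being established; as a self-contained argument this is circular. Two concrete points would need to be nailed down. First, the position correspondence: a $90^{\circ}$ rotation composed with the series--parallel swap generically produces the anti-transpose of the complemented matrix, not the transpose, so you must fix the orientation conventions (which terminal is $S$, how rows and columns of $M_{\Nss}$ are indexed in the drawing) and verify that they yield $(i,j)\mapsto(j,i)$ exactly. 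Second, the locality of the dual construction: you need that editing one matchstick changes the planar dual only at the single crossing dual position, which requires working with a fixed planar embedding and checking that the boundary rows and columns (where the ``$2\times 2$ cell'' degenerates) behave the same way as interior ones. Neither point is deep, but until they are written out the proof is an outline of \cite{DB19}'s argument rather than a proof.
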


Notice that by Theorem \ref{thm:dual-mmn}, the dual of a $\lwMMN$ is a $\wlMMN$.

\subsection{Reliability polynomial}\label{subsec:Rel-poly}

The reliability of a $\TN$ is defined as the probability that the source $S$ and the terminus $T$ are connected, given that each device closes with probability $p.$ The reliability polynomial is presented in the literature under several forms, depending on the basis of the linear space of polynomials that is taken into account. If the Bernstein basis $$\left\{\binom{n}{k}p^{k}(1-p)^{n-k}| k\in \{0,1,...,n\}\right\}$$ 
is used, then we have the so-called N-form (see \cite{1956_MS_1})
\begin{equation} 
\Rel(\Nss;p)=\sum\limits_{k=0}^nN_k\;p^k(1-p)^{n-k}.\label{rel:N_form}
\end{equation}
The coefficient $N_k$ represents the number of ways one can select a subset of $k$ devices in $\Nss$ such that if these $k$ devices are closed and the remaining are open, then the two terminals $S$ and $T$ are connected, i.e., $\Nss$ is closed. Straightforward, well-known basic properties of $N_k$ can be immediately deduced from the definition.

\begin{property}[\cite{1956_MS_1}]\label{pr:Nk-prop}
If $\Nss$ is a $\lwTN$ then:
\begin{itemize}
    \item $\forall k \in \{0,\dots,n\}\;,\; 0\le N_k\leq \binom{n}{k}$; 
    \item $\forall k \in\{0,\dots,l-1\}\;,\;N_k=0$;
    \item $\forall k \in\{n-w+1,\dots,n\}\;,\;N_k=\binom{n}{k}.$
\end{itemize}
\end{property}

\begin{corollary}[\cite{CD2020}]
Denoting by $a_{k}$ the coefficient of $\Rel(\Nss;p)$ written in Bernstein basis, then the coefficients in \eqref{rel:N_form} are $N_k=\binom{n}{k}a_{k}$. As consequence, we deduce that $0\leq a_{k}\leq 1$ for all $k\in \{0,1,...,n\}.$ 
\end{corollary}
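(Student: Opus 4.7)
The proof is essentially a matter of identifying the two expressions of the same polynomial in the same basis, so my plan is very direct. First I would write out explicitly what it means to expand $\Rel(\Nss;p)$ in the Bernstein basis of degree $n$, namely
\[
\Rel(\Nss;p)=\sum_{k=0}^{n} a_k\binom{n}{k}p^{k}(1-p)^{n-k},
\]
and then place this expression side by side with the N-form \eqref{rel:N_form}. Since the Bernstein polynomials $\bigl\{\binom{n}{k}p^{k}(1-p)^{n-k}\bigr\}_{k=0}^{n}$ form a basis of the space of polynomials of degree at most $n$, the coefficients of a polynomial in that basis are uniquely determined; comparing the two expressions term by term immediately yields $N_k=\binom{n}{k}a_k$ for every $k\in\{0,1,\dots,n\}$.

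For the second assertion I would simply invoke Property \ref{pr:Nk-prop}, which gives $0\le N_k\le \binom{n}{k}$ for all admissible $k$. Dividing this double inequality by the strictly positive quantity $\binom{n}{k}$ and substituting the identity just obtained produces $0\le a_k\le 1$, which is the claim.

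There is essentially no obstacle here: the only point that deserves explicit mention is the uniqueness of the Bernstein representation, which is a standard fact about bases of finite-dimensional vector spaces of polynomials. One could alternatively argue by induction on $k$, extracting the coefficients successively from $p=0$ and its derivatives, but the basis uniqueness argument is cleaner and keeps the corollary as short as its content warrants.
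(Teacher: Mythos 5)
Your proof is correct and takes essentially the same (and only natural) route: identify coefficients term by term using the uniqueness of the Bernstein-basis representation, then divide the double inequality of Property \ref{pr:Nk-prop} by the strictly positive binomial coefficient. The paper states this corollary without an explicit proof, citing \cite{CD2020}, and your argument is exactly the one it implicitly relies on.
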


In the sequel, we consider two dual MMNs, denoted by $\Nss$ and $\Nss^{\bot}.$ All over the paper, the coefficients of the reliability polynomial of $\Nss$ will be denoted as in \eqref{rel:N_form} and the coefficients of the reliability polynomial of the dual network $\Nss^{\bot}$ will be denoted by $N_k^{\bot}$. The following complementarity property is well known:
 
 \begin{property}[\cite{1956_MS_1}] 
 If $\Rel(\Nss;p)$ and $\Rel(\Nss^{\bot};p)$ are the reliability polynomials of two dual MMNs of type $(l,w)$ respectively $(w,l)$, then
\begin{equation}\label{eq:dual_polyn}
\Rel(\Nss;p)+\Rel(\Nss^{\bot};1-p)=1.
\end{equation}
 \end{property}

Equation \eqref{eq:dual_polyn} leads to the following complementarity identity, proved in \cite{CD2020} (Property 2 pp.80) for hammock networks:

 \begin{property}[\cite{CD2020,CD2021}] If $N_{k}$ and $N_{k}^{\bot}$ are coefficients of the reliability polynomials of two dual MMNs, $\Nss$ and $\Nss^{\bot}$, then
\begin{equation}\label{suCo}
N_{k}+N_{n-k}^{\bot} = \binom{n}{k},
\end{equation}
for all $k\in \{0,1,2,...,n\}.$
\end{property}
Identity \eqref{suCo} also holds for any dual 2TNs (the proof is identical).

\paragraph{Parallel-of-series and Series-of-parallel.}
For some type of MMNs there is a closed formula of the reliability polynomial. 
\begin{theorem}[\cite{CD2021a}]\label{thm:rel_POS} Let $\PoS$ be a $\lwMMN.$ Then we have

\begin{equation}
\Rel\left(\PoS;p\right)=\sum\limits_{k=l}^{n}\sum\limits_{j=1}^{\left[\frac{k}{l}\right]}(-1)^{j+1}\binom{w}{j}\binom{n-jl}{n-k}p^k(1-p)^{n-k},
\end{equation}
where $\left[\frac{k}{l}\right]$ denotes the integer part of the fraction.
\end{theorem}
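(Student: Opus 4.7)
The plan is to derive the formula directly from the structure of $\PoS$, which consists of $w$ parallel branches, each a series of $l$ devices, and then to rewrite the resulting monomial expression in the Bernstein basis.

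First I would observe that $\PoS$ is functional if and only if at least one of its $w$ parallel branches is fully closed, where a branch is fully closed exactly when all of its $l$ devices are closed. Since the branches are disjoint and each device closes independently with probability $p$, the probability that any chosen set of $j$ branches are all simultaneously functional equals $p^{jl}$. Inclusion--exclusion over the $w$ branches then gives the standard monomial form
\begin{equation*}
\Rel(\PoS;p)=\sum_{j=1}^{w}(-1)^{j+1}\binom{w}{j}p^{jl}.
\end{equation*}

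Next I would pass to the Bernstein (N-form) basis by inserting the partition of unity $1=(p+(1-p))^{n-jl}$ into each term:
\begin{equation*}
p^{jl}=p^{jl}\bigl(p+(1-p)\bigr)^{n-jl}=\sum_{i=0}^{n-jl}\binom{n-jl}{i}p^{jl+i}(1-p)^{n-jl-i}.
\end{equation*}
Changing the summation index to $k=jl+i$ and using the symmetry $\binom{n-jl}{k-jl}=\binom{n-jl}{n-k}$ puts each monomial $p^{jl}$ in the form $\sum_{k=jl}^{n}\binom{n-jl}{n-k}p^{k}(1-p)^{n-k}$.

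Finally I would swap the order of summation. The joint condition $1\le j\le w$ and $jl\le k\le n$ is equivalent, for each $k\in\{l,\dots,n\}$, to $1\le j\le \min\!\bigl(w,\lfloor k/l\rfloor\bigr)$; and because $n=wl$, the upper bound $w$ is never active, so the inner index runs exactly over $j\in\{1,\dots,[k/l]\}$. Substituting gives the claimed identity. The whole argument is essentially bookkeeping; the only step that requires any care is the final re-indexing and verifying that the bound $[k/l]$ (rather than $w$) is indeed the binding constraint, which is immediate from $n=wl$.
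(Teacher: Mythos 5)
Your proof is correct and rests on the same key idea as the paper's (sketched) argument: inclusion--exclusion over the $w$ series branches of $\PoS$. The paper applies it directly to the coefficients $N_k$, counting the $k$-subsets of devices containing at least one full branch (equivalently, distributions of $k$ balls into $w$ urns of height $l$ with at least one urn full), which produces $\binom{n-jl}{n-k}$ immediately; you instead apply it to the probabilities to obtain $1-(1-p^l)^w=\sum_{j=1}^{w}(-1)^{j+1}\binom{w}{j}p^{jl}$ and then recover the same coefficients by expanding $(p+(1-p))^{n-jl}$ and re-indexing --- two presentations of the same computation, and your final step of checking that $\lfloor k/l\rfloor\le w$ (so the bound $[k/l]$ is the binding one, using $n=wl$) is handled correctly.
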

The proof relies of the fact that one can write the coefficients of a $(l,w)-\PoS$ using the formula
\begin{equation}\label{coef:PoS}
N_k=\sum\limits_{j=1}^{\left[\frac{k}{l}\right]}(-1)^{j+1}\binom{w}{j}\binom{n-jl}{n-k}.
\end{equation}

Combined with \eqref{suCo} one can deduce 
\begin{equation}
    N_{n-k}^{\bot}=\sum\limits_{j=0}^{\left[\frac{k}{l}\right]}(-1)^{j}\binom{w}{j}\binom{n-jl}{n-k}.
\end{equation}

\begin{remark}
The coefficients of a $\PoS$ also have a combinatorial interpretation, fact that allows one to deduce basic properties such as those in Proposition \ref{pr:Nk-prop}. Indeed, $N_k$ represents the number of ways one can distribute $k$ balls among $w$ urns, where the urns have height $l$, such that at least one urn is completely filled with balls. 
\end{remark}

Using the complementary property induced by duality \eqref{suCo} we deduce the following result.

\begin{proposition}
Let $\Nss$ be a $\lwMMN.$ Then we have 
\begin{equation}
\sum\limits_{j=1}^{\left[\frac{k}{l}\right]}(-1)^{j+1}\binom{w}{j}\binom{n-jl}{n-k} \leq N_k\leq  \sum\limits_{j=0}^{\left[\frac{n-k}{w}\right]}(-1)^{j}\binom{l}{j}\binom{n-k-jw}{k}
\end{equation}
\end{proposition}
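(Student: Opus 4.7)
The plan is to sandwich $N_k(\Nss)$ between the coefficients of the two extremal MMNs of type $(l,w)$: the parallel-of-series $\PoS$ on the low end and the series-of-parallel $\SoP$ on the high end. Recall from the definition in Section \ref{subsec:MMNs} that every $\lwMMN$ $\Nss$ can be obtained from $\PoS$ by placing vertical matchsticks, and equivalently from $\SoP$ by removing vertical matchsticks. Placing a matchstick identifies two nodes, so for every subset $D$ of $k$ closed devices that make $\Nss$ operational, the same $D$ also makes any network obtained from $\Nss$ by further adding matchsticks operational; dually, removing matchsticks can only destroy operating configurations. Hence
\[
N_k(\PoS)\le N_k(\Nss)\le N_k(\SoP)
\]
for every $\lwMMN$ $\Nss$ and every $k\in\{0,1,\dots,n\}$. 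This monotonicity is the only structural input the proof needs.

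The lower bound is now immediate: the formula \eqref{coef:PoS} from Theorem \ref{thm:rel_POS} gives exactly
\[
N_k(\PoS)=\sum_{j=1}^{[k/l]}(-1)^{j+1}\binom{w}{j}\binom{n-jl}{n-k},
\]
which is the left-hand side of the claimed inequality.

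For the upper bound I would not attack $\SoP$ directly; instead I would use duality. By Theorem \ref{thm:dual-mmn}, the dual of the $\lwMMN$ $\SoP$ is the $\wlMMN$ $\PoS$ (the roles of length and width being swapped). Applying the complementarity identity \eqref{suCo} to the pair $(\SoP,\PoS)$ gives
\[
N_k(\SoP)=\binom{n}{k}-N_{n-k}\bigl(\PoS_{w,l}\bigr).
\]
Substituting the closed form \eqref{coef:PoS} with $l$ and $w$ interchanged and with index $n-k$ yields
\[
N_{n-k}\bigl(\PoS_{w,l}\bigr)=\sum_{j=1}^{[(n-k)/w]}(-1)^{j+1}\binom{l}{j}\binom{n-jw}{k}.
\]
Absorbing $\binom{n}{k}=\binom{n-0\cdot w}{k}\binom{l}{0}$ as the $j=0$ term and flipping signs produces precisely
\[
N_k(\SoP)=\sum_{j=0}^{[(n-k)/w]}(-1)^{j}\binom{l}{j}\binom{n-jw}{k},
\]
which is the right-hand side of the inequality.

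The only delicate point is the extremality assertion in the first paragraph: one must be careful that the matchstick-addition argument applies at the level of subsets of fixed cardinality $k$, so that the comparison is coefficient-wise in $N_k$ and not merely between the polynomials $\Rel(\cdot;p)$. This is fine because the event that the network is closed when the specified $k$ devices are on and the other $n-k$ are off is monotone under node identification, which is exactly what adding a matchstick does. Once this is noted, the rest of the argument is bookkeeping with \eqref{coef:PoS}, \eqref{suCo} and Theorem \ref{thm:dual-mmn}.
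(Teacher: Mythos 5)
Your proof is correct and follows essentially the same route as the paper's: sandwich $N_k$ between the coefficients of the $(l,w)$-$\PoS$ and the $(l,w)$-$\SoP$, read off the lower bound from the closed formula \eqref{coef:PoS}, and obtain the $\SoP$ coefficients via duality and the complementarity identity \eqref{suCo}. Your explicit justification of the extremality $N_k(\PoS)\le N_k(\Nss)\le N_k(\SoP)$ by monotonicity under matchstick addition is a useful addition, since the paper merely asserts this step. One caveat: the expression you derive for the upper bound has $\binom{n-jw}{k}$ where the proposition as printed has $\binom{n-k-jw}{k}$; your version is the correct one (the printed bound already fails at $k=n$, where it gives $0<1=N_n$), so the mismatch reflects a typo in the statement rather than a flaw in your argument.
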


\begin{proof}
Any coefficient $N_k$ of an arbitrary $\lwMMN$ is bigger that or equal to the same coefficient of an $(l,w)-\PoS$, and smaller than or equal to the same coefficient of an $(l,w)-\SoP.$ Using the formula \eqref{coef:PoS} one obtains the first inequality. As for the second inequality we the formula for the coefficients of a $(w,l)-\PoS$, i.e., $\sum_{j=1}^{\left[\frac{k}{w}\right]}(-1)^{j+1}\binom{l}{j}\binom{n-jw}{n-k}$, combined with the fact that its dual is a $(l,w)-\SoP.$ This yields that the $n-k$ coefficient of an $(l,w)-\SoP$ equals $\sum_{j=0}^{\left[\frac{k}{w}\right]}(-1)^{j+1}\binom{l}{j}\binom{n-jw}{n-k}.$ Hence, a simple variable change implies the wanted result.

\end{proof}
\section{Mutual shape properties of the reliability polynomials of two dual networks}\label{sec:Shape-Prop}
%\end{center}

\subsection{Convexity of high order}\label{subsec:convex-HO}

Let us consider $[a,b]\subseteq \mathbb{R}$ an interval and a function $f:[a,b]\rightarrow \mathbb{R}$. Suppose that $n\in \mathbb{N}$.
\begin{definition}
The divided difference of order $n$ of function $f$ on points $a\leq x_1 < x_2 < ... < x_{n+1} \leq b$ is the number defined by:
\begin{equation}
    [x_1, x_2, ..., x_{n+1};f]=\frac{ [x_2, x_3, ..., x_{n+1};f]- [x_1, x_2, ..., x_{n};f]}{x_{n+1}-x_1},
\end{equation}
\begin{equation*}
    [x_1;f]=f(x_1).
\end{equation*}
\end{definition} 

\begin{remark}\label{dnDD}
It is known (see, for example \cite{P1945}) that if a function $f$ is $n$-th order differentiable on a point $x\in (a,b)$ then then the derivative $f^{(n)}(x)$ equals to the limit of $n![x_1, x_2, ..., x_{n+1};f]$ when all points $x_i, i\in \{1,2,...,n+1\}$ tend to $x$.
\end{remark}

The concept of convex function of high order on an interval was introduced in 1926 by E. Hopf \cite{Hopf}. T. Popoviciu \cite{P1934} extended this concept to functions defined on an arbitrary set in 1934. Also, T. Popoviciu \cite{P1945} extensively studied this concept in case of real functions of several real variables.

\begin{definition} [\cite{P1945}]
Function $f$ is said to be $n$-th order convex (non-concave, polynomial, non-convex, concave) on $[a,b]$ if 
\begin{equation}\label{nCV}
[x_1, x_2, ..., n+2;f] > (\geq , =, \leq, <)\quad 0, 
\end{equation}
respectively, for all systems of points $a\leq x_1 < x_2 < ... < x_{n+2} \leq b$
\end{definition}

The functions having one of the properties defined by means of \eqref{nCV} are generally called $n$-th order functions (see \cite{P1934}, \cite{P1935}).

\begin{remark}
The reliability polynomial of a network of type $n$ is a $n$-th order polynomial function on $[0,1]$, since every $n$-th degree polynomial has this property \cite{Hopf}.
\end{remark}

\begin{remark}\label{DkCv}
If function $f$ is $n+1$-th order differentiable on $[a,b]$ then, in view of Remark \ref{dnDD}, it follows that condition \eqref{nCV} is expressible in terms of derivatives as follows: function $f$ is $n$-th order convex (non-concave, polynomial, non-convex, concave) on $[a,b]$ if 
\begin{equation}
f^{(n+1)}(x) > (\geq , =, \leq, <)\quad 0,   
\end{equation}
respectively, for all $x\in [a,b].$ The one side derivatives are considered on points $a$ and $b.$
\end{remark}

\subsection{Convexity properties of the reliability polynomials of two dual minimal networks}\label{subsec:convex-prop-MMNs}
Properties of convexity are accidentally mentioned, both in case of the reliability polynomial of a MMN (see \cite{CD2020}, \cite{CD2021}) and in case of its coefficients sequence (see \cite{H2015}). This subsection presents the research results on the presence of various types of high order convexity to the reliability polynomials of two dual MMNs. The impact of networks duality on the shape of the polynomials is emphasized. 

\begin{figure}
\begin{subfigure}{.32\textwidth}
  \centering
  % include first image
  \includegraphics[width=\linewidth]{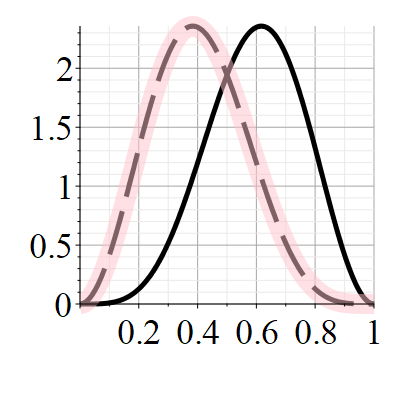}%{deriv1-H35-HD35.eps}  
  \caption{$k=1$}
  \label{fig:sub-first}
\end{subfigure}
\begin{subfigure}{.32\textwidth}
  \centering
  % include second image
  \includegraphics[width=\linewidth]{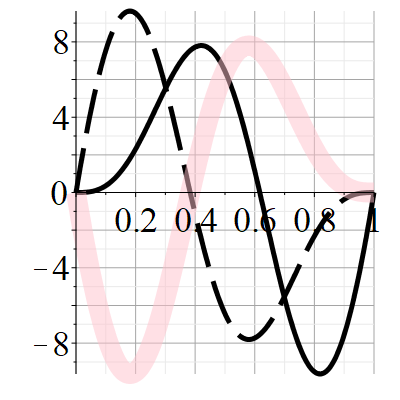}%{deriv2-H35-HD35.eps}  
  \caption{$k=2$}
  \label{fig:sub-second}
\end{subfigure}
\begin{subfigure}{.32\textwidth}
  \centering
  % include third image
  \includegraphics[width=\linewidth]{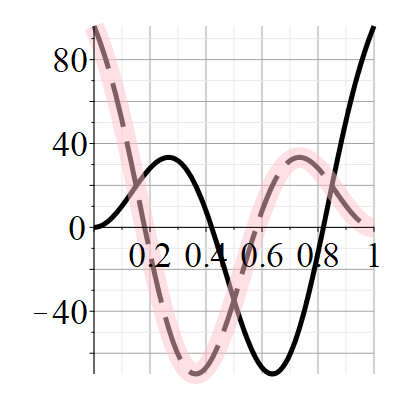}%{deriv3-H35-HD35.eps}  
  \caption{$k=3$}
  \label{fig:sub-third}
\end{subfigure}
\caption{$\frac{d^k}{dp^k}\Rel(\Nss;p)$ (dash black line), $\frac{d^k}{dp^k}\Rel(\Nss^{\bot};p)$ (solid black line), and $\frac{d^k}{dp^k}\Rel(\Nss^{\bot};1-p)$ (solid pink line) for the 3-by-5 hammock network}
\label{fig:deriv}
\end{figure}

\begin{theorem}\label{kCv-Rel}
If $\Rel(\Nss;p)$ and $\Rel(\Nss^{\bot};p)$ are the reliability polynomials of two dual MMNs of type $(l,w)$ respectively $(w,l)$, $n=lw \geq 3,$ and $0 \leq k \leq n$ then the following holds.
\begin{enumerate}
    \item If $k$ is odd then $\Rel(\Nss;p)$ and $\Rel(\Nss^{\bot};1-p)$ are $(k-1)$-th order functions of the same type on each sub-interval of $[0,1]$: either both are $(k-1)$-th order convex or both are $(k-1)$-th order concave.
    \item If $k$ is even then $\Rel(\Nss;p)$ and $\Rel(\Nss^{\bot};1-p)$ are $(k-1)$-th order functions of opposite types on each sub-interval of $[0,1]$: if one polynomial is $(k-1)$-th order convex then the other one is $(k-1)$-th order concave, and conversely.
\end{enumerate}
\end{theorem}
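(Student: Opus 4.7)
My plan is to derive the theorem from a single differential consequence of the complementarity identity \eqref{eq:dual_polyn}. Since $\Rel(\Nss;p)+\Rel(\Nss^{\bot};1-p)=1$ on $[0,1]$, the right-hand side is constant; every $k$-th derivative with respect to $p$ (for $k\ge 1$) must therefore vanish, and the whole parity-dependent sign in the final answer will come from the chain rule applied to the inner substitution $p\mapsto 1-p$ in the second term.

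Concretely, I would differentiate the identity $k$ times in $p$. The first term yields $\Rel^{(k)}(\Nss;p)$, where $\Rel^{(k)}(\Nss;\cdot)$ denotes the intrinsic $k$-th derivative of the polynomial in its own argument. Each differentiation of $\Rel(\Nss^{\bot};1-p)$ pulls out a factor $-1$ from the chain rule, so after $k$ applications this term contributes $(-1)^k\,\Rel^{(k)}(\Nss^{\bot};1-p)$. Equating the total to $0$ gives the pivotal identity
\[
\Rel^{(k)}(\Nss;p) \;=\; (-1)^{k+1}\,\Rel^{(k)}(\Nss^{\bot};1-p),
\]
whose two sides carry identical signs when $k$ is odd and opposite signs when $k$ is even.

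To conclude, I would invoke Remark \ref{DkCv}: the $(k-1)$-th order convexity (respectively concavity) of a $k$-times differentiable function on a sub-interval is equivalent to strict positivity (respectively negativity) of its $k$-th derivative there, with the analogous non-strict versions covering non-concavity and non-convexity. Reading the pivotal identity pointwise on any sub-interval of $[0,1]$ then immediately yields the dichotomy in the statement: for odd $k$ the two intrinsic $k$-th derivatives have identical signs, so $\Rel(\Nss;p)$ and $\Rel(\Nss^{\bot};1-p)$ are $(k-1)$-th order functions of the same type, and for even $k$ the signs are opposite, so the types flip.

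The one delicate point I anticipate is bookkeeping the chain-rule factor $(-1)^k$ correctly and keeping it associated with the intrinsic $k$-th derivative of the dual polynomial evaluated at $1-p$. Otherwise one is tempted to rewrite $\Rel(\Nss^{\bot};1-p)=1-\Rel(\Nss;p)$ and conclude naively that the two functions always have opposite $(k-1)$-th order convexity as functions of $p$, which would collapse the parity dichotomy. Keeping the intrinsic interpretation of $(k-1)$-th order convexity of the dual polynomial at its own argument $1-p$ is what makes the stated odd/even distinction survive.
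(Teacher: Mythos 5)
Your proof is correct and is essentially the paper's own argument: differentiate the complementarity identity \eqref{eq:dual_polyn} $k$ times to obtain $\frac{d^k}{dp^k}\Rel(\Nss;p) = (-1)^{k+1}\frac{d^k}{dp^k}\Rel(\Nss^{\bot};1-p)$ (the paper's equation \eqref{DkRel}), and then apply Remark \ref{DkCv} to translate the sign of the $k$-th derivative into $(k-1)$-th order convexity or concavity, with the parity of $k$ governing whether the types agree or flip. Your closing caution about keeping the chain-rule factor attached to the \emph{intrinsic} $k$-th derivative of the dual polynomial evaluated at $1-p$ is well taken; that is precisely the reading under which \eqref{DkRel} and the statement of the theorem are consistent, a point the paper leaves implicit.
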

\begin{proof}
The two polynomials are differentiable functions of all orders, verifying \eqref{eq:dual_polyn}. By successively differentiating the equation \eqref{eq:dual_polyn} $k$ times one gets:
\begin{equation}\label{DkRel}
    \frac{d^k}{dp^k}\Rel(\Nss;p) = (-1)^{k+1} \frac{d^k}{dp^k}\Rel(\Nss^{\bot};1-p),
\end{equation}
which gives the conclusion, based on Remark \ref{DkCv}.
\end{proof}

Figure \ref{fig:deriv} shows the mutual behaviour of the derivatives of the reliability polynomials of two dual hammock networks. The corresponding impact on the shape of reliability  polynomials is described by the following corollaries.

\begin{corollary}
Suppose that $n\geq 3$ and $k=2$. Relation \eqref{DkRel} implies that if $p_{0}$ is an inflection point of the polynomial $\Rel(\Nss;p)$ then it is an inflection point of $\Rel(\Nss^{\bot};1-p),$ which implies that $1-p_{0}$ is an inflection point of $\Rel(\Nss^{\bot};p).$
\end{corollary}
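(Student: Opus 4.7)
The plan is to unwind the definition of an inflection point (a point where the second derivative vanishes and changes sign) and apply relation \eqref{DkRel} for $k=2$, which reads
\begin{equation*}
\frac{d^2}{dp^2}\Rel(\Nss;p) = -\frac{d^2}{dp^2}\Rel(\Nss^{\bot};1-p).
\end{equation*}
First I would verify the first implication: if $p_0$ is an inflection point of $\Rel(\Nss;p)$, then the left-hand side vanishes at $p_0$ and switches sign in a neighbourhood of $p_0$. By the identity above, the right-hand side also vanishes at $p_0$ and switches sign there (the overall sign flip produced by the factor $-1$ does not destroy a sign change, it only reverses which side is positive and which is negative). Hence $p_0$ is an inflection point of $\Rel(\Nss^{\bot};1-p)$.

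Next I would handle the change of variable to pass from $\Rel(\Nss^{\bot};1-p)$ to $\Rel(\Nss^{\bot};p)$. Setting $g(p)=\Rel(\Nss^{\bot};1-p)$ and $h(q)=\Rel(\Nss^{\bot};q)$, the chain rule gives $g'(p)=-h'(1-p)$ and therefore $g''(p)=h''(1-p)$. Since $q=1-p$ is an affine bijection of $[0,1]$ onto itself with non-zero derivative, it maps neighbourhoods to neighbourhoods and preserves the existence of a sign change, while reversing its orientation. Thus $g''$ vanishes and changes sign at $p_0$ if and only if $h''$ vanishes and changes sign at $q_0 = 1-p_0$, which means exactly that $1-p_0$ is an inflection point of $\Rel(\Nss^{\bot};p)$.

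The argument is essentially bookkeeping of signs and of the substitution $q=1-p$; there is no serious obstacle. The only point that deserves care is to state explicitly what is meant by an inflection point (I would adopt the standard definition: $f''(p_0)=0$ together with a sign change of $f''$ at $p_0$, which is the appropriate notion here since $\Rel(\Nss;\cdot)$ is a polynomial hence $C^{\infty}$) so that the chain-rule step and the sign-flip step are unambiguous. The hypothesis $n\ge 3$ is what guarantees that $\Rel(\Nss;p)$ is of degree at least $3$, so that its second derivative is a non-trivial polynomial and the notion of inflection point is meaningful.
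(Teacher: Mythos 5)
Your proposal is correct and follows the same route the paper intends: the corollary is presented as an immediate consequence of relation \eqref{DkRel} with $k=2$, and your argument simply makes explicit the sign-change bookkeeping and the substitution $q=1-p$ that the paper leaves implicit. The added care about the definition of an inflection point and the chain-rule step $g''(p)=h''(1-p)$ is a faithful elaboration rather than a different method.
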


\begin{corollary}
Suppose that $n\geq 3,$ $k$ is odd and take $p=\frac{1}{2}$ in equation \eqref{DkRel}. It follows that 
$$\frac{d^k}{dp^k}\Rel(\Nss;\frac{1}{2}) = \frac{d^k}{dp^k}\Rel(\Nss^{\bot};\frac{1}{2}),$$
which means that all derivatives of odd order of the reliability polynomials of two dual networks, $\Rel(\Nss;p)$ and $\Rel(\Nss^{\bot};p)$, have the same value. In particular, if $k=1$ one gets that the two polynomials, $\Rel(\Nss;p)$ and $\Rel(\Nss^{\bot};1-p)$, have parallel tangents at $p=\frac{1}{2}.$
\end{corollary}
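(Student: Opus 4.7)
The plan is to obtain the corollary as an immediate specialization of equation \eqref{DkRel} from Theorem \ref{kCv-Rel} at the fixed point $p = \tfrac{1}{2}$ of the involution $p \mapsto 1-p$, combined with the parity of the sign $(-1)^{k+1}$.

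First I would recall the identity
\[
\frac{d^k}{dp^k}\Rel(\Nss;p) \;=\; (-1)^{k+1}\,\frac{d^k}{dp^k}\Rel(\Nss^{\bot};1-p),
\]
established in the proof of Theorem \ref{kCv-Rel}, interpreting the right-hand side as the $k$-th derivative of $\Rel(\Nss^{\bot};\cdot)$ evaluated at $1-p$. Setting $p = \tfrac{1}{2}$ collapses the argument on the right to $\tfrac{1}{2}$ as well; since $k$ is odd, $(-1)^{k+1} = +1$. Substituting both observations into the identity yields the asserted equality
\[
\frac{d^k}{dp^k}\Rel(\Nss;\tfrac{1}{2}) \;=\; \frac{d^k}{dp^k}\Rel(\Nss^{\bot};\tfrac{1}{2}).
\]

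For the tangent statement I would specialize to $k = 1$: the just-obtained equality reduces to $\Rel'(\Nss;\tfrac{1}{2}) = \Rel'(\Nss^{\bot};\tfrac{1}{2})$, so the two reliability curves have tangent lines of identical slope at the abscissa $p = \tfrac{1}{2}$, that is, parallel tangents at that point.

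There is essentially no technical obstacle, since Theorem \ref{kCv-Rel} already carries out the iterative differentiation of the complementarity identity \eqref{eq:dual_polyn}. The only subtlety worth flagging is notational: the expression ``$\frac{d^k}{dp^k}\Rel(\Nss^{\bot};1-p)$'' must be read as the $k$-th derivative of $\Rel(\Nss^{\bot};\cdot)$ evaluated at $1-p$, not as the $k$-th derivative of the composite map $p \mapsto \Rel(\Nss^{\bot};1-p)$ (which would introduce a spurious extra factor $(-1)^k$ via the chain rule). Once this convention is fixed, the fixed-point property $1 - \tfrac{1}{2} = \tfrac{1}{2}$ closes the argument immediately, and the hypothesis $n \geq 3$ ensures enough derivatives are available for the statement to be non-vacuous for all relevant odd $k$.
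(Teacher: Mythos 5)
Your proposal is correct and follows exactly the paper's (implicit) argument: the corollary is presented as an immediate specialization of equation \eqref{DkRel} at the fixed point $p=\tfrac{1}{2}$ of the involution $p\mapsto 1-p$, with $(-1)^{k+1}=1$ because $k$ is odd, and the $k=1$ case giving equal slopes. Your remark that $\frac{d^k}{dp^k}\Rel(\Nss^{\bot};1-p)$ must be read as the $k$-th derivative of $\Rel(\Nss^{\bot};\cdot)$ evaluated at $1-p$ (not the derivative of the composite) is the correct convention and matches how the paper uses \eqref{DkRel}.
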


\begin{corollary}
Suppose that $n\geq 3,$ $k$ is even and take $p=\frac{1}{2}$ in equation \eqref{DkRel}. It follows that 
$$\frac{d^k}{dp^k}\Rel(\Nss;\frac{1}{2}) = - \frac{d^k}{dp^k}\Rel(\Nss^{\bot};\frac{1}{2}).$$
In particular, if $k=2$ one gets that if $p=\frac{1}{2}$ is an inflection  point of a reliability polynomial of a network then it is an inflection point of the reliability polynomial of the dual network as well.
\end{corollary}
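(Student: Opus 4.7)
The plan is essentially to specialize the identity \eqref{DkRel} of Theorem \ref{kCv-Rel} to the two conditions in the hypothesis: $k$ even, and $p = 1/2$. Concretely, I would first note that for any even $k$ the sign factor $(-1)^{k+1}$ in \eqref{DkRel} equals $-1$, so the identity reads
\begin{equation*}
\frac{d^k}{dp^k}\Rel(\Nss;p) = - \frac{d^k}{dp^k}\Rel(\Nss^{\bot};1-p).
\end{equation*}
Setting $p = 1/2$ on both sides, the symmetry $1-\tfrac{1}{2}=\tfrac{1}{2}$ forces the derivative of $\Rel(\Nss^{\bot};\cdot)$ to be evaluated at the same argument $1/2$, which immediately yields the claimed equality
\begin{equation*}
\frac{d^k}{dp^k}\Rel(\Nss;\tfrac{1}{2}) = -\frac{d^k}{dp^k}\Rel(\Nss^{\bot};\tfrac{1}{2}).
\end{equation*}

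For the inflection point statement, I would specialize to $k=2$. If $p=1/2$ is an inflection point of $\Rel(\Nss;p)$, then by definition $\frac{d^2}{dp^2}\Rel(\Nss;1/2)=0$ and the second derivative changes sign at $1/2$. The just-proved equality gives $\frac{d^2}{dp^2}\Rel(\Nss^{\bot};1/2)=0$. To conclude that $1/2$ is an inflection point of $\Rel(\Nss^{\bot};p)$, I would use the pointwise identity $\Rel(\Nss)''(p) = -\Rel(\Nss^{\bot})''(1-p)$ valid for all $p$ in a neighbourhood of $1/2$: a sign change of the left-hand side across $p=1/2$ forces a sign change of $\Rel(\Nss^{\bot})''$ across $1-p=1/2$ as well (the change of variable $q=1-p$ is a decreasing bijection of a neighbourhood of $1/2$ onto itself, and the extra minus sign preserves the fact that a sign change occurs, only possibly reversing its direction).

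There is essentially no obstacle: the main work has already been absorbed into Theorem \ref{kCv-Rel} and its equation \eqref{DkRel}. The only subtlety worth flagging in writing is distinguishing between the $k$-th derivative of the composed function $p\mapsto \Rel(\Nss^{\bot};1-p)$ and the $k$-th derivative of $\Rel(\Nss^{\bot};\cdot)$ evaluated at $1-p$; the statement \eqref{DkRel} uses the latter convention, and I would briefly recall this to avoid confusion when plugging in $p=1/2$.
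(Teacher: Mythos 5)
Your proposal is correct and follows exactly the route the paper intends: the corollary is stated without an explicit proof precisely because it is the immediate specialization of equation \eqref{DkRel} to even $k$ and $p=\tfrac{1}{2}$, which is what you carry out. Your extra care in verifying the sign-change condition for the inflection point and in distinguishing $\frac{d^k}{dp^k}\bigl[\Rel(\Nss^{\bot};1-p)\bigr]$ from $\Rel(\Nss^{\bot})^{(k)}(1-p)$ goes slightly beyond what the paper records, but is consistent with it.
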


\begin{corollary}
Suppose that $\Nss$ is a minimal network of length $l\geq 3$ and width $w$, which means that $n=lw\geq 3$. Property \ref{pr:Nk-prop} implies that all derivatives of order $k < l$ of its reliability polynomial $\Rel(\Nss;p)$ equal to $0$ at $p=0$. As consequence, the x-axis is tangent to the graph of this polynomial at $p=0$ and the curvature radius of $\Rel(\Nss;p)$ tends to infinity when $p\rightarrow 0$. All these, together with the non-negativity of a reliability polynomial over $[0,1]$, imply that the polynomial is first order convex in the neighborhood of the origin. Theorem \ref{kCv-Rel} implies that the reliability polynomial $\Rel(\Nss^{\bot};1-p)$ is concave in the neighborhood of the origin. The functions $\Rel(\Nss^{\bot};1-p)$ and $\Rel(\Nss^{\bot};p)$ are symmetric with respect to the straight line $p=\frac{1}{2}$, which implies that it is also convex in the neighborhood of the origin.
\end{corollary}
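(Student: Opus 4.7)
My plan is to walk through the corollary clause by clause, since its statement is essentially a self-guided argument assembling established facts: Property \ref{pr:Nk-prop}, Remark \ref{DkCv}, and Theorem \ref{kCv-Rel}. \textbf{First}, I would use the Bernstein form $\Rel(\Nss;p)=\sum_{k=0}^{n} N_k\,p^k(1-p)^{n-k}$ together with Property \ref{pr:Nk-prop}, which forces $N_k=0$ for every $k\in\{0,\dots,l-1\}$. Converting to the monomial basis, the expansion $p^k(1-p)^{n-k}=p^k+\text{higher-order terms in }p$ shows that the lowest monomial term of $\Rel(\Nss;p)$ is $N_l\,p^l$ with $N_l\ge 1$ (a minimal network of length $l$ admits at least one $l$-device closed configuration, namely a minimal path). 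This immediately gives $\Rel^{(k)}(\Nss;0)=0$ for every $k<l$; in particular $\Rel(\Nss;0)=\Rel'(\Nss;0)=0$, so the graph passes through the origin with the $x$-axis as tangent. Plugging this into the standard curvature formula $\kappa(p)=|\Rel''(\Nss;p)|/(1+\Rel'(\Nss;p)^2)^{3/2}$ and using $l\ge 3$ (so $\Rel''(\Nss;0)=0$ too) yields $\kappa(0)=0$, whence the curvature radius $1/\kappa(p)$ tends to $+\infty$ as $p\to 0$.

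\textbf{Next}, for first-order convexity on a right neighborhood of the origin, I would write $\Rel''(\Nss;p)=l(l-1)N_l\,p^{l-2}+O(p^{l-1})$ and observe that this is strictly positive on some interval $[0,\varepsilon)$, because $N_l\ge 1$ and $l\ge 3$; by Remark \ref{DkCv} this is exactly first-order convexity on $[0,\varepsilon)$. The non-negativity of the reliability polynomial on $[0,1]$ enters only to confirm geometrically that the graph lies above its horizontal tangent $y=0$. To transfer to the dual, I would invoke Theorem \ref{kCv-Rel} with $k=2$ (the even case): it asserts that $\Rel(\Nss;p)$ and $\Rel(\Nss^{\bot};1-p)$ are first-order functions of \emph{opposite} types on every sub-interval, so the convexity just established forces $\Rel(\Nss^{\bot};1-p)$ to be first-order concave on $[0,\varepsilon)$.

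\textbf{Finally}, I would return to $\Rel(\Nss^{\bot};p)$ via the substitution $p\mapsto 1-p$, which is the reflection about $p=\tfrac{1}{2}$. I expect this last step to be the main obstacle, because a naive reflection transports the shape data from a neighborhood of $0$ to a neighborhood of $1$, not back to a neighborhood of $0$: since $(f(1-p))''=f''(1-p)$, the concavity of $\Rel(\Nss^{\bot};1-p)$ near $0$ only translates directly to concavity of $\Rel(\Nss^{\bot};p)$ near $1$. To bridge this, I would simply rerun the direct computation from the first paragraph on $\Nss^{\bot}$ itself: by Theorem \ref{thm:dual-mmn}, $\Nss^{\bot}$ is a $\wlMMN$, so Property \ref{pr:Nk-prop} gives $N_k^{\bot}=0$ for $k<w$, and, provided $w\ge 2$, the Taylor expansion yields $\Rel''(\Nss^{\bot};p)\ge 0$ on a right neighborhood of $0$, establishing convexity there. (The degenerate case $w=1$, in which $\Nss^{\bot}$ is the all-parallel network, would have to be treated separately or excluded.)
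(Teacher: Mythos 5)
Your proposal is correct and, for most of its length, follows the same route the paper takes (the corollary is stated as its own argument: Property \ref{pr:Nk-prop} gives the vanishing of the derivatives of order $k<l$ at $p=0$, hence the tangency to the x-axis and the infinite curvature radius, hence first order convexity near the origin, and Theorem \ref{kCv-Rel} with $k=2$ transfers this, as concavity, to $\Rel(\Nss^{\bot};1-p)$). You merely make explicit what the paper leaves implicit, e.g.\ that the lowest monomial term is $N_l p^l$ with $N_l\ge 1$, so that the non-negativity of the polynomial is not actually needed to fix the sign of $\Rel''$ near $0$. Where you genuinely diverge is the last step. The paper deduces convexity of $\Rel(\Nss^{\bot};p)$ near the origin from the symmetry of $\Rel(\Nss^{\bot};p)$ and $\Rel(\Nss^{\bot};1-p)$ about $p=\tfrac{1}{2}$; as you observe, that reflection preserves the sign of the second derivative and carries a neighbourhood of $0$ to a neighbourhood of $1$, so it only yields concavity of $\Rel(\Nss^{\bot};p)$ near $p=1$ and says nothing about its behaviour near $p=0$. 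Your repair --- rerunning the direct argument on $\Nss^{\bot}$, which by Theorem \ref{thm:dual-mmn} is a \wlMMN, so that $N_k^{\bot}=0$ for $k<w$ and the same expansion gives $\Rel''(\Nss^{\bot};p)>0$ on a right neighbourhood of $0$ whenever $w\ge 2$ --- is the correct way to reach the stated conclusion, and your remark that $w=1$ must be excluded (the all-parallel dual $1-(1-p)^n$ is concave on all of $[0,1]$) identifies a genuine restriction that the corollary's hypotheses omit. The only cosmetic slip is the claim that $\Rel''(\Nss;p)$ is strictly positive on $[0,\varepsilon)$: for $l\ge 3$ it vanishes at $p=0$ itself, so the strict inequality holds only on $(0,\varepsilon)$.
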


Another convexity property of functions that is important in the context of the reliability theory in case of $\TN$s is the log-convexity, defined as follows:
\begin{definition}[\cite{W1994}]
A function $f:E\rightarrow \mathbb{R}$, $E\subseteq \mathbb{R}$, is said to be log-convex (or log-concave) on $E$ if $f(x)>0$ for all $x\in E$ and function $\log(f)$ is convex (or concave, respectively) on $E$.
\end{definition}
\begin{remark}\label{logconc}
It is proved in \cite{W1994}, pp. 207, that the log-convex (log-concave) functions are also convex (concave) functions, but the converse is not true. Based on the results of Huh \cite{H2015} and Lenz \cite{L2013} the sequence of the coefficients of a the reliability polynomial is log-concave. This property will be used in the sequel, because it implies the concavity of the coefficients function of a reliability polynomial, which will be defined in the next section.
\end{remark}

Consider $\Nss$ a $\lwMMN$. The reliability polynomial of this network, expressed in Bernstein basis, is $\Rel(\Nss;p)$ defined by (\ref{rel:N_form}). Knowing the reliability polynomial $\Rel(\Nss;p)$ is equivalent to knowing the corresponding function $F_{(l,w)}$ defined by (\ref{F}). We consider the dual network $\Nss^{\bot}$ together with its reliability polynomial $\Rel(\Nss^{\bot};p)$. Let us define two functions, which we call coefficients functions in the sequel:
$F_{(l,w)}:[0,n]\rightarrow \mathbb{R}$ and $F_{(w,l)}:[0,n]\rightarrow \mathbb{R}$ by
\begin{equation}\label{F}
F_{(l,w)}(x)=
\begin{cases}
0, & \text{if $x=0$}\\
(N_k-N_{k-1})x+kN_{k-1}-(k-1)N_k, & \text{if $x\in [k-1,k], k\in\{1,2,...,n\}$}.
\end{cases}
\end{equation}
\begin{equation}\label{Fd}
F_{(w,l)}(x)=
\begin{cases}
0, & \text{if $x=0$}\\
(N_k^{\perp}-N_{k-1}^{\perp})x+kN_{k-1}^{\perp}-(k-1)N_k^{\perp}, & \text{if $x\in [k-1,k], k\in\{1,2,...,n\}$}.
\end{cases}
\end{equation}

In fact, function $F_{(l,w)}$ (respectively $F_{(w,l)}$) is the segmentary linear function obtained based on the coefficient functions of the reliability polynomials two  dual hammock networks, as defined in \cite{H2015,L2013,CD2021}. Knowing the reliability polynomials of the two dual networks is equivalent to knowing the two coefficient functions defined by (\ref{F}) and (\ref{Fd}).
\begin{remark}
The coefficient function $F_{(l,w)}$ is concave on $[l-1, n-w+1]$. The coefficients function $F_{(w,l)}$ is concave functions on $[w-1, n-l+1]$. This shape is a consequence of Property \ref{pr:Nk-prop} and Remark \ref{logconc}.
\end{remark}

\subsection{Extremal properties of the coefficients functions}
The sequence of coefficients of the reliability polynomial have some monotony properties, that are consequences both of their complementarity property \eqref{suCo} and of Remark \ref{logconc}. The concavity of functions $F_{(l,w)}$ and $F_{(w,l)}$ on $[0,n]$, together with Property \ref{pr:Nk-prop}, imply that the index of the maximum coefficient of the two reliability polynomials is in $[l-1, n-w+1]$, and $[w-1, n-l+1]$ respectively. We construct, in the next section, a method to approximate the reliability polynomials of two dual networks, denoted here by $\Nss$ of $(l,w)$-type, and its dual $\Nss^{\bot}$ of $(w,l)$-type. We prove that the maximum coefficient of the approximate reliability polynomial of the $\Nss$ is reached in the same interval as the maximum coefficient of the exact polynomial. The most frequent interval that contains the index of the maximum point of $F_{(l,w)}$, as identified by studying the completely known reliability polynomials of small size hammock networks and also small size compositions of series and parallel, is 
$$I_1=\left[\frac{n-w+l}{2}, \frac{n-w+l}{2}+\frac{n-w-l+2}{4}\right].$$
Few networks have the index of the maximum coefficient of the reliability polynomial not belonging to $I_1$ but to a larger interval,
$$I_2=\left[\frac{n-w+l}{2}, n-w+\frac{1}{2}\right].$$ 
Few examples are presented in Table \ref{tab:my_label}. The maximum coefficient of the reliability polynomial is denoted by $\max(N_{k})$, and the value of the index $k$ of the maximum coefficient is denoted by $\argmax(F_{(l,w)}(x))$ in this table.

\begin{table}[!ht]
    \centering
    \begin{tabular}{|c|c||c|c|c|c|}
    \hline
         $w$&$l$&$\max(N_{k})$&$\argmax(F_{(l,w)}(x))$&$I_1$&$I_2$  \\
         \hline
     \multirow{4}{*}{2}&3&10&4&$[3.5,4]$&$[3.5,4.5]$\\
     \cline{2-6}
         &4&20&6&$[5,6]$&$[5,6.5]$\\
        &4&24&5&$[5,6]$&$[5,6.5]$\\
        \cline{2-6}
        &5&56&7&$[6.5,8]$&$[6.5,8.5]$\\
        \hline
        \multirow{4}{*}{3}&2&16&3&$[2.5,3.25]$&$[2.5,3.5]$\\
        \cline{2-6}
        &3&84&5&$[4.5,5.75]$&$[4.5,6.5]$\\
        \cline{2-6}
        &4&450&7&$[6.5,8.25]$&$[6.5,9.5]$\\
        \cline{2-6}
        &5&2443&9&$[8.5,10.75]$&$[8.5,12.5]$\\
        \hline
        \multirow{6}{*}{4}&2&62&4&$[3,4]$&$[3,4.5]$\\
        &2&66&4&$[3,4]$&$[3,4.5]$\\
        \cline{2-6}
        &3&698&7&$[5.5,7.25]$&$[5.5,8.5]$\\
        \cline{2-6}
        &4&7700&9&$[8,10.5]$&$[8,12.5]$\\
        &4&8312&9&$[8,10.5]$&$[8,12.5]$\\
        \cline{2-6}
        &5&88948&11&$[10.5,13.75]$&$[10.5,16.5]$\\
        \hline
        \multirow{4}{*}{5}&2&244&5&$[3.5,4.75]$&$[3.5,5.5]$\\
        \cline{2-6}
        &3&5653&8&$[6.5,8.75]$&$[6.5,10.5]$\\
        \cline{2-6}
        &4&132750&11&$[9.5,12.75]$&$[9.5,15.5]$\\
        \cline{2-6}
        &5&3162650&14&$[12.5,16.75]$&$[12.5,20.5]$\\
        \hline
    \end{tabular}
        \caption{Extremal values of the coefficients functions for small hammocks.}
    \label{tab:my_label}
\end{table}

A special case is presented by Parallel-of-series and Series-of-parallel networks. In case of a $\PoS$, we proved that index of the maximum value of the coefficients \eqref{coef:PoS} is in $\left[\frac{n}{2}, n-w+\frac{1}{2}\right].$ We omit the proof of this property both because of its length and because it exceeds the purpose of this paper. But we remark that the coefficients of the reliability polynomial of all types of network have the same extremal property, which we retrieve to its approximant.

\section{Shape preserving simultaneous approximation of the reliability polynomials of two dual two-terminal networks}\label{sec:Approx-algo}

\subsection{An efficient constructive method}\label{subsec:alg-approx}
In his section we intend to build a method of approximation of functions $F_{(l,w)}$ and $F_{(w,l)}$ by means of a spline function, starting from the properties of the reliability polynomials described above and in \cite{2017_CDBP} and \cite{2018_CBDP}. Some generalized convexity properties as described in \cite{CL2002} will be used. We construct segmentary polynomial function meant to imitate the shape of functions $F_{(l,w)}$ and $F_{(w,l)}$. As proved in \cite{P1935}, given a continuous function on a bounded closed interval, the Bernstein approximation polynomial of degree $s$ of this function preserves the convexity of the approximated function (see also \cite{L1953} and \cite{Nat1964}). This property gave us the idea of approximating functions $F_{(l,w)}$ and $F_{(w,l)}$ by means of polynomials imitating the Bernstein polynomial of third degree in \cite{CD2020}. The results of the cubic spline approximation algorithm presented in \cite{CD2020} constructed both by means of Lagrange interpolation and by a weakened Bernstein type approximation operator are compared in \cite{CD2021}. Also, the algorithm from \cite{CD2020} is refined in \cite{CD2021} in order to improve the accuracy of the approximation. In this paper we describe a version of the approximation algorithm from \cite{CD2021} obtained by replacing the cubic splines with quadratic splines. The initial information on the two coefficients functions $F_{(l,w)}$ and $F_{(w,l)}$ refers to their values on intervals $[0, l-1] \cup [n-w+1, n]$ and $[0, w-1]\cup [n-l+1, n]$ respectively. We also have shape information on these functions, i.e. the concavity of these functions is a consequence of the results from \cite{H2015} and \cite{L2013}.
We complete the missing information from intervals $[l-1, n-w+1]$ and respectively $[w-1, n-l+1]$ by the known information on the shape of the coefficients functions $F_{(l,w)}$ and $F_{(w,l)}$. This is the reason to carefully chose the initial knots in order to generate a function having the same shape as $F_{(l,w)}$ and $F_{(w,l)}$. In order to approximate $F_{(l,w)}$ and $F_{(w,l)}$ we construct two continuous quadratic spline functions $f_{(l,w)}:[0,n]\rightarrow \mathbb{R}$ and $f_{(w,l)}:[0,n]\rightarrow \mathbb{R}$ that verify the following conditions:

\begin{equation}
\left\{
  \begin{array}{ll}
    f_{(l,w)}(0)=f_{(l,w)}(1)=...=f_{(l,w)}(l-1)=0 \\
    f_{(l,w)}(s)=N_{s}>\binom{n}{w-1} \\
    f_{(l,w)}(n-w+k)=\binom{n}{w-k}, k\in \{0,1,...w-1\},
  \end{array}
\right.
\end{equation}

\begin{equation}
\left\{
  \begin{array}{ll}
    f_{(w,l)}(0)=f_{(w,l)}(1)=...=f_{(w,l)}(w-1)=0 \\
    f_{(w,l)}(t)=N_{t}^{\perp}>\binom{n}{l-1} \\
    f_{(w,l)}(n-l+k)=\binom{n}{l-k}, k\in \{0,1,...l-1\}.
  \end{array}
\right.
\end{equation}
for some points $s\in [l-1, n-w+1]$ and $t\in [w-1, n-l+1]$. 
\begin{remark}
If $l>2$ and $w>2$, it is always possible to find two numbers $s\in [l-1, n-w+1]$ and $t\in [w-1, n-l+1]$ such as the two conditions $N_{s}>\binom{n}{w-1}$ and $N_{t}^{\perp}>\binom{n}{l-1}$ are valid. Indeed, one can always compute $N_{l}$ and $N_{w}^{\perp}$ by means of the technique from \cite{2018_CBDP}. Then, one can compute $N_{n-w}$ and $N_{n-l}^{\perp}$ using the coefficients complementarity relation \eqref{suCo}. The relation\eqref{suCo} implies that at least two of the four coefficients verify the needed conditions. 
\end{remark}

In order to define the two functions $f_{(l,w)}$ and $f_{(w,l)}$ we  have previously taken into account the convexity properties of the second degree polynomial, that allows us to define approximation operators that preserve some shape properties of the approximated curve. In the sequel we define an approximation function by interpolating the coefficients functions using quadratic splines conveniently chosen in order to preserve the convexity and concavity shapes. Function $f_{(l,w)}$ is searched as:
\begin{equation}\label{flw}
f_{(l,w)}(x)=
\begin{cases}
0, & \text{if $0\leq x\leq l-1$}\\
Ax^{2}+Bx+C, & \text{if $l-1<x\leq n-w+1$}\\
d_{(l,w)}(k)(x), & \text{if $x\in (k-1, k], k\in\{n-w+2,...,n\}$}
\end{cases}
\end{equation}
Here
\begin{align*}
d_{(l,w)}(k)(x)&=\left(\binom{n}{k}-\binom{n}{k-1}\right)x+k\binom{n}{k-1} - (k-1)\binom{n}{k}
\end{align*}
are the straight line segments determined by points $(k-1, \binom{n}{k-1})$ and $(k, \binom{n}{k})$, for all $k\in\{n-w+2,...,n\}$ respectively. Also, the coefficients $A, B, C\in \mathbb{R}$ are obtained using the interpolation conditions:

\begin{equation}\label{inter}
\begin{cases}
\lim_{\substack{x\rightarrow l-1\\x>l-1}} f_{(l,w)}(x)=0\\
f_{(l,w)}(s)=N_{s}>\binom{n}{w-1}\\
f_{(l,w)}(n-w+1)=\binom{n}{w-1}
\end{cases}
\end{equation}

Function $f_{(w,l)}$ is searched as:
\begin{equation}\label{fwl}
f_{(w,l)}(x)=
\begin{cases}
0, & \text{if $0\leq x\leq w-1$}\\
A^{\perp}x^{2}+B^{\perp}x+C^{\perp}, & \text{if $w-1<x\leq n-l+1$}\\
d_{(w,l)}(k)(x), & \text{if $x\in (k-1, k], k\in\{n-l+2,...,n\}$}
\end{cases}
\end{equation}

Here
\begin{align*}
&d_{(w,l)}(k)=\left(\binom{n}{k}-\binom{n}{k-1}\right)x+k\binom{n}{k-1} - (k-1)\binom{n}{k}
\end{align*}
are the straight line segments determined by points $(k-1, \binom{n}{k-1})$ and $(k, \binom{n}{k})$, for all $k\in\{n-l+2,...,n\}$. As above, the coefficients $A^{\perp}, B^{\perp}, C^{\perp} \in \mathbb{R}$ are obtained using the interpolation conditions:

\begin{equation}\label{interD}
\begin{cases}
\lim_{\substack{x\rightarrow w-1\\x>w-1}} f_{(w,l)}(x)=0\\
f_{(w,l)}(t)=N_{t}^{\perp}>\binom{n}{l-1}\\
f_{(w,l)}(n-l+1)=\binom{n}{l-1}
\end{cases}
\end{equation}

The interpolation conditions \eqref{inter} and \eqref{interD} lead to the following systems of linear equations in order to compute the functions $f_{(l,w)}$ and $f_{(w,l)}$ using (\ref{flw}) and (\ref{fwl}):

\begin{equation}
\left\{
  \begin{array}{ll}
    A(l-1)^{2}+B(l-1)+C=0 \\
    As^{2}+Bs+C=N_{s}\\
    A(n-w+1)^{2}+B(n-w+1)+C=\binom{n}{w-1},
  \end{array}
\right.\label{eq:syst21}
\end{equation}

\begin{equation}
\left\{
  \begin{array}{ll}
    A^{\perp}(w-1)^{2}+B^{\perp}(w-1)+C^{\perp}=0 \\
    A^{\perp}t^{2}+B^{\perp}t+C^{\perp}=N_{t}^{\perp}\\
    A^{\perp}(n-l+1)^{2}+B^{\perp}(n-l+1)+C^{\perp}=\binom{n}{l-1}.
  \end{array}
\right.\label{eq:syst22}
\end{equation}

 The approximation algorithm, based on determining the functions $f_{(l,w)}$ and $f_{(w,l)}$ using the solutions of the two systems of equations obtained by Cramer's rule, is as follows.\\
\clearpage
\textbf{The algorithm}:
\begin{description}
  \item[Step 1] Compute the values of two coefficients $N_{s}$ and $N_{t}^{\perp}$ using some technique from literature. If $N_{l}$ and $N_{w}$ are chosen then we use the method from \cite{2018_CBDP} and then we compute the values $N_{n-w}$ and $N_{n-l}^{\perp}$ using (\ref{suCo}). After that we put $s=n-w$ and $t=n-l$.
  \item[Step 2] Compute the coefficients of the approximate functions $f_{(l,w)}$ and $f_{(w,l)}$, by:
  \begin{align*}
      A&=\frac{\binom{n}{w-1}(s-l+1)-N_{s}(n-w-l+2)}{(s-l+1)(n-w-l+1)(n-w-l+2)}\\
      B&=\frac{N_{s}(n-w-l+2)(n-w+l)-\binom{n}{w-1}(s-l+1)(s+l-1)}{(s-l+1)(n-w-l+1)(n-w-l+2)}\\
      C&=\frac{(l-1)\left[\binom{n}{w-1}s(s-l+1)-N_{s}(n-w+1)(n-w-l+2)\right]}{(s-l+1)(n-w-l+1)(n-w-l+2)}\\
      A^{\perp}&=\frac{\binom{n}{l-1}(t-w+1)-N_{t}^{\perp}(n-w-l+2)}{(t-w+1)(n-w-l+1)(n-w-l+2)}\\
      B^{\perp}&=\frac{N_{t}^{\perp}(n-w-l+2)(n-l+w)-\binom{n}{l-1}(t-w+1)(t+w-1)}{(t-w+1)(n-w-l+1)(n-w-l+2)}\\
      C^{\perp}&=\frac{(w-1)\left[\binom{n}{l-1}t(t-w+1)-N_{t}^{\perp}(n-l+1)(n-w-l+2)\right]}{(t-w+1)(n-w-l+1)(n-w-l+2)}.
  \end{align*}

  \item[Step 3.] Write functions $f_{(l,w)}$ and $f_{(w,l)}$ using (\ref{flw}) and (\ref{fwl}) respectively.
  \item[Step 4.] Compute $f_{(l,w)}(k)$, $k\in \{l, l+1,..., n-w\}.$ 
  \item[Step 5.] Compute $f_{(w,l)}(k)$, $k\in \{w, w+1,..., n-l\}.$ 
  \item[Step 6.] Compute 
  \begin{equation}\label{delta}
  \Delta(k)=\binom{n}{k}-f_{(l,w)}(k)-f_{(w,l)}(n-k),
  \end{equation} 
  for each $k\in \{\min\{l-1,w-1\},...\max\{n-l+1,n-w+1\}\}$.
  \item[Step 7.] Compute $\tilde{N}(F_{(l,w)};k)=f_{(l,w)}(k)+\frac{\Delta(k)}{2}$, and  $\tilde{N}(F_{(w,l)};n-k)=f_{(w,l)}(n-k)+\frac{\Delta(k)}{2}$. 
  \item[Step 8.] If $\tilde{N}(F_{(l,w)};k)<0$ then replace $\tilde{N}(F_{(l,w)};k)=0$, and put  $\tilde{N}(F_{(w,l)};n-k)=\binom{n}{k}$ (or converse, if the dual coefficient is negative). 
  
  \item[Step 9.] Output the approximation polynomials 
\begin{equation}
\ApRel(\Nss;p)=\sum\limits_{k=0}^{n}\tilde{N}(F_{(l,w)};k)\;p^k(1-p)^{n-k},\label{rel:approx-pol}\end{equation}
%and similar in case of the dual network:
\begin{equation}
\ApRel(\Nss^{\bot};p)=\sum\limits_{k=0}^{n}\tilde{N}(F_{(w,l)};k)\;p^k(1-p)^{n-k}.\label{rel:approx-poldual}
\end{equation}
\end{description}

\begin{remark}
The overall time complexity of computing $\ApRel(\Nss;p)$ and $\ApRel(\Nss^{\bot};p)$ is $O(n)$, when $n\to\infty.$ This fact can be easily deduce either by inspecting each step of our algorithm or by adapting the proof of Theorem 5 from \cite{CD2021}. 
\end{remark}

\subsection{Shape and extremum properties of the approximation operator}\label{subsec:properties-approx}

In this subsection we suppose that both $l\geq 2$ and $w\geq 2$ and at least one inequality is strict. It implies that $n\geq 6$. 
\begin{remark}
All the invariant properties proved in \cite{CD2021} (Property 14, Corollary 15 and Property 16) in case of the use of an approximation operator constructed by means of cubic spline functions stay valid. One can prove that the approximate reliability polynomial of two dual networks obtained by using quadratic splines keep invariant the complementarity relations \eqref{eq:dual_polyn}, \eqref{suCo} and their consequences in a similar manner as in \cite{CD2021}, which means that:
\begin{equation}
    \tilde{N}(F_{(l,w)};k)+\tilde{N}(F_{(w,l)};n-k)=\binom{n}{k},
\end{equation}
\begin{equation}
    \sum\limits_{k=0}^{n}\left[\tilde{N}(F_{(l,w)};k)+\tilde{N}(F_{(w,l)};n-k)\right]=2^{n},
\end{equation}
\begin{equation}\label{ApCompl}
    \ApRel(\Nss;p)+\ApRel(\Nss^{\bot};1-p)=1.
\end{equation}
\end{remark}

\begin{remark} 
As a consequence of \eqref{ApCompl}, it follows that all the properties of the derivatives of the reliability polynomials of two dual networks proved in the previous section, Theorem \ref{kCv-Rel}, stay valid in case of the approximation polynomials. The corollaries of Theorem \ref{kCv-Rel} stay also valid, implying that the same type of high order convexity are retrieved to the approximation polynomials. Simulations on small size MMNs, showing these shape properties are in Figures \ref{fig:derivAprel_l_1} and \ref{fig:derivAprel_l}.
\end{remark}
It is proved in \cite{H2015} and \cite{L2013} that the coefficients sequence of the reliability polynomial of a MMN has the log-concavity property. It implies, as discussed above, that the coefficients functions $F_{(l,w)}$ and $F_{(w,l)}$ are concave on intervals $[l-1, n]$ and $[w-1, n]$ respectively. The concavity is preserved by the spline approximation functions $f_{(l,w)}$ and $f_{(w,l)}$. As consequence of the known information on the coefficients function \eqref{F} and \eqref{Fd}, it follows that there are two numbers $s\in \{l, l+1, ..., n-w\}$ and $t\in \{w, w+1, ..., n-l\}$ such as $N_{s}>\binom{n}{w-1}$ and $N_{t}^{\bot}>\binom{n}{l-1}$. 

\begin{property}\label{sign}
If $l>2$ and $w\geq 2$ and if $s\in \{l, l+1, ..., n-w\}$ such as $N_{s}>\binom{n}{w-1}$, then function $f_{(l,w)}$ is concave and $f_{(l,w)}(x)\geq 0$ on interval $[l-1, n-w+1]$.
\end{property}
\begin{proof}
It is either obvious or elementary to prove that $$(s-l+1)<(n-w-l+2),$$ $$(n-w-l+2)(n-w+l)>(s-l+1)(s+l-1),$$ $$s(s-l+1)<(n-w+1)(n-w-l+2).$$
Using these inequalities, one gets that $A\leq 0$, $B\geq 0$ and $C\leq 0$. The concavity of the parabola is a consequence of the negativity of $A$. The non-negativity of function $f_{(l,w)}$ is a consequence of its definition \eqref{flw} and the hypothesis on $s$.
\end{proof}

\begin{remark}
If $l\geq 2$ and $w>2$ and if $t\in \{w, w+1, ..., n-l\}$ is chosen such as $N_{t}^{\bot}>\binom{n}{l-1}$ then $$A^{\bot}\leq 0, \quad B^{\bot}\geq 0, \quad C^{\bot}\leq 0.$$
As consequence, function $f_{(w,l)}$ is concave and $f_{(w,l)}(x)\geq 0$ on $[w-1, n-l+1]$.
\end{remark}

If $x,y\in\mathbb{R}$ and $x<y$ then we denote the length of interval $[x, y]$ by $L(x;y)=y-x$ in the sequel.

\begin{property}\label{lb}
Suppose that $l>2$ and $w\geq 2$. Suppose that $s$ is chosen such that $N_{s}>\binom{n}{w-1}$. Let us denote by $V(x_{V},y_{V})$ the maximum point of function $f_{(l,w)}$ on $[l-1, n-w+1]$. Then
\begin{equation}
x_{V}\geq \frac{n-w+l}{2}.
\end{equation}
\end{property}
\begin{proof}
Let us denote the length of intervals $[l-1,n-w+1]$, $[s,n-w+1]$ and $[l-1,s]$ by $L(l-1;n-w+1)$, $L(s;n-w+1)$ and $L(l-1;s)$ respectively. Because $l \leq s \leq n-w$, it follows that $L(l-1;n-w+1) > L(s;n-w+1)$ and $L(l-1;n-w+1) > L(l-1;s)$. The abscissa of the vertex of parabola $f_{(l,w)}$ is
\begin{align*}
x_{V}&=\frac{-B}{2A}\\
&=\frac{N_{s}(n-w-l+2)(n-w+l)}{2\left[N_{s}(n-w-l+2)-\binom{n}{w-1}(s-l+1)\right]}\\
&-\frac{\binom{n}{w-1}(s-l+1)(s+l-1)}{2\left[N_{s}(n-w-l+2)-\binom{n}{w-1}(s-l+1)\right]}\\
&=\frac{n-w+l}{2}+\frac{\binom{n}{w-1}L(l-1;s)L(s;n-w+1)}{2\left[N_{s}L(l-1;n-w+1)-\binom{n}{w-1}L(l-1;s)\right]}.\\
\end{align*}
From the hypothesis $s\in [l, n-w]\cap \mathbb{N}$ and also as in the proof of Property \ref{sign}, one gets
\begin{align*}
\frac{\binom{n}{w-1}L(l-1;s)L(s;n-w+1)}{2\left[N_{s}L(l-1;n-w+1)-\binom{n}{w-1}L(l-1;s)\right]}>0,
\end{align*}
which means that the abscissa of the maximum point is greater than the middle of the interval $[l-1, n-w+1]$. The same procedure applies in case of the dual network.
\end{proof}

\begin{property}
Suppose that $l>2$ and $w\geq 2$. Suppose that $N_{n-w}>\binom{n}{w-1}$. Let us denote by $V(x_{V},y_{V})$ the maximum point of function $f_{(l,w)}$ on $[l-1, n-w+1]$. Then
\begin{equation}
 \frac{n-w+l}{2}\leq x_{V}\leq n-w+\frac{1}{2}.
\end{equation}
\end{property}
\begin{proof}
The lower bound is a particular case of Property \ref{lb}. To prove the upper bound property, we search for a real number $S$ such as
$$x_{V} \leq S + \frac{n-w+l}{2}.$$
This inequality is equivalent to
$$N_{n-w}(n-w-l+2) \leq (2S+1) \left[N_{n-w}(n-w-l+2)-\binom{n}{n-w}(n-w-l+1)\right],$$
$$0 \leq 2SN_{n-w}(n-w-l+2)-(2S+1)\binom{n}{n-w}(n-w-l+1).$$
Because of the hypothesis $N_{n-w}>\binom{n}{w-1}$, one gets
$$2S(n–w–l+2) \geq (2S+1)(n–w–l+1),$$
which meas that $2S \geq n–w–l+1$ and
$$x_{V} \leq  \frac{n-w-l+1}{2} + \frac{n-w+l}{2} = n-w+ \frac{1}{2}$$
\end{proof}
\begin{property}\label{ub}
Suppose that $l>2$ and $w\geq 2$. Suppose that $s$ is chosen such that $N_{s}>\binom{n}{w-1}$. Let us denote by $V(x_{V},y_{V})$ the maximum point of function $f_{(l,w)}$ on $[l-1, n-w+1]$. Let $E(l,w;s)=N_{s}[L(l-1;n-w+1)]^{2}-\binom{n}{w-1}L(l-1;s)[3L(l-1;n-w+1)-2L(l-1;s)].$ If \begin{equation}
    E(l,w;s)\geq 0 \label{condition}
\end{equation}
%\begin{align}\label{condition}
%&E(l,w;s)=N_{s}[L(l-1;n-w+1)]^{2}
%\end{align}
%\begin{align*}
%&-\binom{n}{w-1}L(l-1;s)[3L(l-1;n-w+1)-2L(l-1;s)]\geq 0,
%\end{align*}
then
\begin{equation}
x_{V}\leq \frac{n-w+l}{2}+\frac{n-w-l+2}{4},
\end{equation}
\end{property}
\begin{proof}
One may write the abscissa $x_{V}$, which was computed above, as:
\begin{align*}
x_{V}&=\frac{n-w+l}{2}+\frac{n-w-l+2}{4}\\
&-\frac{N_{s}(n-w-l+2)^{2}-\binom{n}{w-1}(s-l+1)(3n-3w-l-2s+4)}{4\left[N_{s}(n-w-l+2)-\binom{n}{w-1}(s-l+1)\right]},
\end{align*}
Taking into account that, $N_{s}\geq \binom{n}{w-1}$ and $s\in [l, n-w]\cap \mathbb{N}$, one gets
\begin{align*}
&\frac{N_{s}(n-w-l+2)^{2}-\binom{n}{w-1}(s-l+1)(3n-3w-l-2s+4)}{4\left[N_{s}(n-w-l+2)-\binom{n}{w-1}(s-l+1)\right]}=\\
&\frac{N_{s}[L(l-1;n-w+1)]^{2}}{4\left[N_{s}L(l-1;n-w+1)-\binom{n}{w-1}L(l-1;s)\right]}\\
&-\frac{\binom{n}{w-1}L(l-1;s)[3L(l-1;n-w+1)-2L(l-1;s)]}{4\left[N_{s}L(l-1;n-w+1)-\binom{n}{w-1}L(l-1;s)\right]}>0.
\end{align*}
Due to \eqref{condition}, it follows the required inequality,
\begin{align*}
&x_{V}<\frac{n-w+l}{2}+\frac{n-w-l+2}{4}.
\end{align*}
\end{proof}

\begin{corollary}
Suppose that $l\geq 2$ and $w>2$. Suppose that $t$ is chosen such that $N_{t}^{\bot}>\binom{n}{l-1}$ and condition \eqref{condition} holds, i.e.
\begin{equation*}
    E(w,l;t)=N_{t}^{\bot}[L(w-1;n-l+1)]^{2}-\binom{n}{l-1}L(w-1;t)[3L(w-1;n-l+1)-2L(w-1;t)]\geq 0.
\end{equation*}
%\begin{align*}
%&E(w,l;t)=N_{t}^{\bot}[L(w-1;n-l+1)]^{2}
%\end{align*}
%\begin{align*}
%&-\binom{n}{l-1}L(w-1;t)[3L(w-1;n-l+1)-2L(w-1;t)]\geq 0.
%\end{align*}
Let us denote by $V^{\bot}(x_{V^{\bot}},y_{V^{\bot}})$ the maximum point of function $f_{(w,l)}$ on $[w-1, n-l+1]$. Then the abscissa of the maximum point of the dual network has the same boundary property:
\begin{equation}
\frac{n-l+w}{2}\leq x_{V^{\bot}}\leq \frac{n-l+w}{2}+\frac{n-w-l+2}{4}.
\end{equation}
\end{corollary}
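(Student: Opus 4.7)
The plan is to obtain this corollary by a direct symmetry argument, invoking Property \ref{lb} and Property \ref{ub} on the dual network rather than redoing the calculus. The key observation is that the construction of $f_{(w,l)}$ in \eqref{fwl} together with the linear system \eqref{eq:syst22} is obtained from the construction of $f_{(l,w)}$ in \eqref{flw} and \eqref{eq:syst21} by the formal substitution $l\leftrightarrow w$, $s\leftrightarrow t$, $N_s\leftrightarrow N_t^{\bot}$, $\binom{n}{w-1}\leftrightarrow\binom{n}{l-1}$. Consequently the coefficients $A^{\bot},B^{\bot},C^{\bot}$ are obtained from $A,B,C$ by the same substitution, and the quantity $E(w,l;t)$ in the statement of the corollary is precisely the image of $E(l,w;s)$ under this substitution.

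First I would note that the hypotheses of the corollary, namely $l\geq 2$ and $w>2$, are exactly the hypotheses of Property \ref{lb} and Property \ref{ub} with the roles of $l$ and $w$ interchanged, so both properties apply verbatim to $f_{(w,l)}$. Then I would apply Property \ref{lb} (after the symmetry swap) to conclude
\begin{equation*}
x_{V^{\bot}}\geq \frac{n-l+w}{2},
\end{equation*}
and Property \ref{ub} (again after the symmetry swap, where the hypothesis $E(l,w;s)\geq 0$ becomes the assumed $E(w,l;t)\geq 0$) to conclude
\begin{equation*}
x_{V^{\bot}}\leq \frac{n-l+w}{2}+\frac{n-w-l+2}{4}.
\end{equation*}
Note that $n-w-l+2$ is symmetric in $l$ and $w$, so the upper bound's second term is unchanged under the swap, which is consistent with the statement.

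The only point requiring any care is to justify rigorously the symmetry, that is, to observe that $f_{(w,l)}$ plays exactly the role of $f_{(l,w)}$ for the dual network viewed as a $(w,l)-$MMN with coefficients $N_k^{\bot}$. This is immediate from comparing \eqref{flw}--\eqref{inter} with \eqref{fwl}--\eqref{interD} term by term. There is no genuine obstacle; the corollary is essentially a notational consequence of Properties \ref{lb} and \ref{ub}, and a short sentence invoking the symmetry of the construction and the hypothesis $E(w,l;t)\geq 0$ suffices.
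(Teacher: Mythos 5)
Your proposal is correct and matches the paper's own proof, which likewise simply repeats the reasoning of Property \ref{lb} and Property \ref{ub} with the coefficients $A^{\bot}$ and $B^{\bot}$, i.e.\ exactly the $l\leftrightarrow w$, $s\leftrightarrow t$, $N_s\leftrightarrow N_t^{\bot}$ substitution you describe. Your explicit remark that the hypotheses $l\geq 2$, $w>2$ and the condition $E(w,l;t)\geq 0$ are the images of the originals under this swap is a slightly more careful justification than the paper gives, but the argument is the same.
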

\begin{proof}
The boundary properties of the quadratic spline approximation of the coefficients function in case of the dual case are obtained following the same reasoning as in the proof of Property \ref{lb} and Property \ref{ub}, using the coefficients $A^{\bot}$ and $B^{\bot}$.
\end{proof}

\begin{remark}
Practical simulations show that the hypothesis on $s$ of being chosen such that $N_{s}>\binom{n}{w-1}$ is a necessary condition for Property \ref{sign}, Property \ref{lb} and Property \ref{ub}. The following examples show that the sufficiency does not hold. There are cases when $N_{s}<\binom{n}{w-1}$ but \eqref{condition} holds, which is shown by the following examples. The necessary and sufficient condition for Property \ref{ub} consists in both hypotheses.
\end{remark}

\begin{example}
We consider few cases of small hammock networks as in \cite{2018_CBDP}. After performing simulations taking $l\in\{3,4,5\}$ and $w\in\{3,4,5\}$ we have obtained that in cases $$(l,w)\in\{(3,4), (4,3), (3,3), (3,5), (5,3), (4,4)\}$$ inequality $N_{s}>\binom{n}{w-1}$ implies \eqref{condition}. This implication is not valid in cases $(l,w)\in\{(4,5), (5,4), (5,5)\}$. Tables \ref{tab:1}, \ref{tab:2}, and \ref{tab:3} contain the numerical results obtained in each case. The exact coefficients $N_{s}$ included in Tables \ref{tab:1}, \ref{tab:2}, and \ref{tab:3} are taken from \cite{2018_CBDP}.
\begin{table}[!ht]
\begin{center}
\begin{tabular}{|c||c|c|c|c|c||c|}
\hline
$s$&$s=5$&\cellcolor[gray]{0.8}$s=6$&$s=7$&$s\in\{8,...,14\}$&$s=15$&$s=16$\\
\hline\hline
$N_{s}$&438&\cellcolor[gray]{0.8}3072&13178&$N_{s}>\binom{20}{4}$&15468&$\binom{20}{4}=4845$\\
$E(4,5;s)$& -265128&\cellcolor[gray]{0.8}39523&1626302&$E(4,5;s)>0$&1741992&0\\
\hline
\end{tabular}
\end{center}
\caption{The case of $\Hs$ with $(l,w)=(4,5)$: if $s=6$ we have \eqref{condition} valid but the coefficient $N_6$ is less than $\binom{20}{6}.$}\label{tab:1}
\end{table}

\begin{table}[!ht]
\begin{center}
\begin{tabular}{|c||c|c|c|c|c||c|}
\hline
$s$&$s=5$&\cellcolor[gray]{0.8}$s=6$&$s=7$&$s\in\{8,...,15\}$&$s=16$&$s=17$\\
\hline\hline
$N_{s}$&36&\cellcolor[gray]{0.8}510&3334&$N_{s}>\binom{20}{3}$&4816&$\binom{20}{3}=1140$\\
$E(5,4;s)$&-36095&\cellcolor[gray]{0.8}6390&450586&$E(5,4;s)>0$&608704&0\\
\hline
\end{tabular}
\end{center}
\caption{The case of $\Hs$ of $(l,w)=(5,4)$: if $s=6$ we have \eqref{condition} valid but the coefficient $N_6$ is less than the binomial.}\label{tab:2}
\end{table}

\begin{table}[!ht]
\begin{center}
\begin{tabular}{|c||c|c|c|c|c||c|}
\hline
$s$&$s=6$&\cellcolor[gray]{0.8}$s=7$&$s=8$&$s\in\{9,...,19\}$&$s=20$&$s=21$\\
\hline\hline
$N_{s}$&994&\cellcolor[gray]{0.8}8983&50796&$N_{s}>\binom{25}{4}$&53078&$\binom{25}{4}=12650$\\
$E(5,5;s)$&-901834&\cellcolor[gray]{0.8}888337&12504244&$E(5,5;s)>0$&11493942&0\\
\hline
\end{tabular}
\end{center}
\caption{The case of $\Hs$ of $(l,w)=(5,5)$: if $s=7$ we have \eqref{condition} valid but the coefficient $N_7$ is less than the binomial.}\label{tab:3}
\end{table}

\end{example}

\begin{example}
Few interesting negative results, showing various behaviours of the approximant if $N_{s}< \binom{n}{w-1}$, are in the following cases:
\begin{itemize}
  \item If $(l,w)=(3,3)$ and $s=3$ then $N_{3}=8<\binom{9}{2}=36$ and $A=-0.25<0$, which means that the approximant is concave. As one can see, Property \ref{sign} and Property \ref{lb} are valid. But $x_{V}=22.5>5.75$, exceeding the upper bound from Property \ref{ub}.
  \item If $(l,w)=(4,4)$ and $s=4$ then $N_{4}=18<\binom{16}{3}=560$ and $A=\frac{38}{9}>0$, which means that the approximant is convex. Since $x_{V}=\frac{26}{19}<8$ neither Property \ref{sign}, nor Property \ref{lb}, nor Property \ref{ub} is valid.
  \item If $(l,w)=(4,4)$ and $s=5$ then $N_{5}=204<\binom{16}{3}=560$ and $A=-\frac{41}{9}<0$, which means that the approximant is concave, validating Property \ref{sign}. But $x_{V}=\frac{592}{41}>8$ and also $x_{V}>10.5$. It means that Property \ref{lb} is true, but Property \ref{ub} is not valid.
  \item If $(l,w)=(4,4)$-dual and $s=4$ then $N_{4}^{\bot}=24<\binom{16}{3}=560$ and $A^{\bot}=\frac{32}{9}>0$, which means that the approximant is convex. Since $x_{V^{\bot}}=\frac{1}{8}<8$ neither Property \ref{sign}, nor Property \ref{lb}, nor Property \ref{ub} is valid.
  \item If $(l,w)=(4,4)$-dual and $s=5$ then $N_{5}^{\bot}=264<\binom{16}{3}=560$ and $A^{\bot}=-\frac{76}{9}<0$, which means that the approximant is concave, validating Property \ref{sign}. But $x_{V^{\bot}}=\frac{208}{19}>8$ and also $x_{V^{\bot}}>10.5$. It means that Property \ref{lb} is true, but Property \ref{ub} is not valid.
\end{itemize}
\end{example}

\medskip

\subsection{Error estimation}\label{subsec:error-estimation}

In this subsection we also suppose that both $l\ge 2$ and $w\ge 2$ and at least one of the inequalities is strict. The small amount of initial data and of theoretic information on the reliability polynomial of a MMN arise difficulties in assessing the error of the approximation. The error of the corresponding approximated reliability polynomial is determined using the Chebychev distance between functions, because it gives information on the number of exact decimals obtained by approximation. As consequence, one can prove that the same upper bound of the error of this approximation as obtained in \cite{CD2021}. All the same, a more refined upper bound than the result from \cite{CD2021} is proved in this subsection in case of approximation using quadratic spline functions.

\begin{proposition}\label{pr:aprox-error}
Let $\Nss$ be a $(l,w)$-type MMN, and let us denote $$M=\max\{|N_k-N_{n-k}^{\bot}|\;,0\leq k\leq n\}$$
$$D=(B-B^{\bot}-2nA^{\bot})^{2}-4(A-A^{\bot})(C-C^{\bot}-B^{\bot}n-A^{\bot}n^{2}).$$
Then we have
\begin{equation}\label{eroare}
\max\{\left|\Rel(\Nss;p)-\ApRel(\Nss;p)\right|\;,0\leq p\leq1\}\leq 
\end{equation}
$$\leq \frac{(n-l-w)}{2^{n+1}}\left[M+\max\left\{\binom{n}{l-1}, \binom{n}{w-1},\left|\frac{D}{4(A-A^{\bot})}\right|\right\}\right].$$
\end{proposition}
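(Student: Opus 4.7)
The strategy is to first obtain a clean expression for the coefficient-wise error $N_k-\tilde{N}(F_{(l,w)};k)$, then to bound its two constituents separately, and finally to pass from the Bernstein expansion to the Chebyshev norm on $[0,1]$.

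First, I would substitute the definition $\tilde{N}(F_{(l,w)};k)=f_{(l,w)}(k)+\Delta(k)/2$ into $N_k-\tilde{N}(F_{(l,w)};k)$ and eliminate $\binom{n}{k}$ using the duality identity $\binom{n}{k}=N_k+N_{n-k}^{\bot}$ from \eqref{suCo}. A short calculation produces the symmetric formula
\begin{equation*}
N_k-\tilde{N}(F_{(l,w)};k)=\tfrac{1}{2}\bigl[(N_k-N_{n-k}^{\bot})-(f_{(l,w)}(k)-f_{(w,l)}(n-k))\bigr].
\end{equation*}
Next, I would verify that this quantity vanishes on the two ``tail'' ranges $k\in\{0,\dots,l-1\}\cup\{n-w+1,\dots,n\}$. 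On these ranges, Property \ref{pr:Nk-prop} together with the boundary prescriptions on $f_{(l,w)}$ and $f_{(w,l)}$ (namely $f_{(l,w)}$ equals $0$ on $[0,l-1]$ and $\binom{n}{\cdot}$ on $[n-w+1,n]$, and symmetrically for $f_{(w,l)}$) make both bracketed differences vanish. Consequently the error polynomial reduces to
\begin{equation*}
\Rel(\Nss;p)-\ApRel(\Nss;p)=\sum_{k=l}^{n-w}\bigl(N_k-\tilde{N}_k\bigr)\,p^{k}(1-p)^{n-k}.
\end{equation*}

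The core of the argument is bounding $|N_k-\tilde{N}_k|$ uniformly on the middle range. The $|N_k-N_{n-k}^{\bot}|$ piece is $\le M$ by definition. For the other piece, let $h(x)=f_{(l,w)}(x)-f_{(w,l)}(n-x)$. Since $f_{(l,w)}$ and $f_{(w,l)}$ are quadratic on the middle segments given by \eqref{flw} and \eqref{fwl}, and $n-x$ stays in $[w-1,n-l+1]$ when $x\in[l-1,n-w+1]$, the function $h$ is itself a quadratic on $[l-1,n-w+1]$ whose leading, linear and constant coefficients recover exactly the discriminant $D$ announced in the statement. By the extremal behaviour of a parabola on an interval, $|h(x)|$ is attained either at one of the endpoints or (if it lies in the interval) at the vertex $x^{*}$. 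At the endpoints the interpolation conditions \eqref{inter} and \eqref{interD} give $h(l-1)=-\binom{n}{l-1}$ and $h(n-w+1)=\binom{n}{w-1}$, while at the vertex $h(x^{*})=D/[4(A-A^{\bot})]$. Combining,
\begin{equation*}
|N_k-\tilde{N}_k|\le \tfrac{1}{2}\Bigl[M+\max\Bigl\{\tbinom{n}{l-1},\,\tbinom{n}{w-1},\,\Bigl|\tfrac{D}{4(A-A^{\bot})}\Bigr|\Bigr\}\Bigr].
\end{equation*}

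Finally, I would pass to the supremum over $p\in[0,1]$. Bounding each monomial $p^{k}(1-p)^{n-k}$ by $2^{-n}$ (the same device used in the cubic-spline error analysis of \cite{CD2021}) and noting that the sum runs over the $n-l-w+1$ indices $k\in\{l,\dots,n-w\}$ — absorbing the off-by-one into the statement's $(n-l-w)$ — yields
\begin{equation*}
\max_{p\in[0,1]}\bigl|\Rel(\Nss;p)-\ApRel(\Nss;p)\bigr|\le \frac{n-l-w}{2^{n+1}}\Bigl[M+\max\Bigl\{\tbinom{n}{l-1},\tbinom{n}{w-1},\Bigl|\tfrac{D}{4(A-A^{\bot})}\Bigr|\Bigr\}\Bigr],
\end{equation*}
as required. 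The main obstacle is the third step: one must track signs carefully when expanding $f_{(w,l)}(n-k)$ in powers of $k$ so that the discriminant produced by Cardano/vertex formulas matches the $D$ announced in the statement, and one must also argue uniformly in the two separate cases where the vertex does or does not lie inside $[l-1,n-w+1]$, the latter being strictly easier since only the endpoint values $\binom{n}{l-1}$ and $\binom{n}{w-1}$ then intervene.
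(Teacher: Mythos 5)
Your proposal follows essentially the same route as the paper's proof: the same splitting of $N_k-\tilde{N}(F_{(l,w)};k)$ via the duality identity \eqref{suCo} into the $M$-term and the quadratic $f_{(l,w)}(k)-f_{(w,l)}(n-k)$, the same vertex/endpoint analysis yielding the three-way maximum with $\left|D/(4(A-A^{\bot}))\right|$, and the same bound $p^k(1-p)^{n-k}\leq 2^{-n}$ over the middle range of indices. The off-by-one in the count of summands that you flag is present in the paper's own argument as well, so your treatment matches it in both substance and level of rigor.
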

\begin{proof} Using the definition of $\Rel$ and $\ApRel$ we obtain
\begin{equation*}
\left|\Rel(\Nss;p)-\ApRel(\Nss;p)\right|\leq
\sum_{k=0}^n\left|N_k-\tilde{N}(F_{(l,w)};k))\right|p^k(1-p)^{n-k}.
\end{equation*}

But, according to the algorithm, 
\begin{align*}
\tilde{N}(F_{(l,w)};k)&=f_{(l,w)}(k)+\frac{\Delta(k)}{2}\\
&=f_{(l,w)}(k)+\frac{1}{2}\left[\binom{n}{k}-f_{(l,w)}(k)-f_{(w,l)}(n-k)\right]\\
&=\frac{\binom{n}{k}+f_{(l,w)}(k)-f_{(w,l)}(n-k)}{2},
\end{align*}
and, using \eqref{suCo}, one can compute
\begin{align*}
N_k-\tilde{N}(F_{(l,w)};k))&=\frac{2N_k-\binom{n}{k}-f_{(l,w)}(k)+f_{(w,l)}(n-k)}{2}\\  
&=\frac{N_k-N_{n-k}^{\bot}-f_{(l,w)}(k)+f_{(w,l)}(n-k)}{2},
\end{align*}
for each $k\in\{l-1,...,n-w+1\}$. Taking into account \eqref{flw}, we evaluate as follows:
\begin{align*}
\left|\Rel(\Nss;p)-\ApRel(\Nss;p)\right|
&\leq\sum_{k=l}^{n-w}\left|\frac{N_k-N_{n-k}^{\bot}-f_{(l,w)}(k)
+f_{(w,l)}(n-k)}{2}\right|p^k(1-p)^{n-k}\\
&\leq \frac{1}{2^{n}}\left[\sum_{k=l}^{n-w}\frac{\left|N_k-N_{n-k}^{\bot}\right|}{2}+\sum_{k=l}^{n-w}\frac{\left|-f_{(l,w)}(k)+f_{(w,l)}(n-k)\right|}{2}\right]\\
&\leq \frac{(n-w-l)M}{2^{n+1}}+\frac{1}{2^{n+1}}\sum_{k=l}^{n-w}\left|f_{(l,w)}(k)-f_{(w,l)}(n-k)\right|.
\end{align*}
But, according to \eqref{flw} and \eqref{fwl}, we obtain
\begin{align*}
f_{(l,w)}(k)-f_{(w,l)}(n-k)
&=(A-A^{\bot})x^2+(B-B^{\bot}-2nA^{\bot})x+C-C^{\bot}-B^{\bot}n-A^{\bot}n^2.
\end{align*}
According to \eqref{flw} and \eqref{fwl}, the absolute maximum value of this second degree polynomial is the absolute value of the ordinate of its vertex. The maximum value of this function on $[l-1, n-w+1]$ is either in the maximum point of this polynomial or a value in the extremities of this interval. An elementary computation gives
\begin{align*}
&\max \left|f_{(l,w)}(k)-f_{(w,l)}(n-k)\right|=\max\left\{\binom{n}{l-1}, \binom{n}{w-1}, \left|-\frac{D}{4(A-A^{\bot})}\right|\right\},
\end{align*}
for all $k\in\{l, l+1, ..., n-w\}.$
Replacing this value in the previous evaluation of the approximation error one gets the required upper bound.
\end{proof}

\begin{remark}
The error of approximation can be improved by conveniently choosing the initial coefficients, $N_{s}$ and $N_{t}^{\bot}$, if possible.
\end{remark}

%\medskip
\section{Simulation results}\label{sec:5}

We have conducted simulations for hammock networks of size $w=3,l=5.$ The exact coefficients are extracted from \cite{2018_CBDP}. Firstly, we have implemented our algorithm and with input data $N_{l-1}=0, N_{n-w},N_{n-w+1}.$ 

\renewcommand{\arraystretch}{1.2}
\begin{table}[!ht]
    \centering
    \resizebox{\textwidth}{!}{%
    \begin{tabular}{|c|c||c|c|c|c|c|c|c|c|c|c|c|c|c|c|c|c|}
    \hline
    &     $N_i$&\cellcolor[gray]{0.8}0&\cellcolor[gray]{0.8}0&\cellcolor[gray]{0.8}0&\cellcolor[gray]{0.8}16&178&889&2562&4663&5653&4811&\cellcolor[gray]{0.8}2982&\cellcolor[gray]{0.8}1365&\cellcolor[gray]{0.8}455&\cellcolor[gray]{0.8}105&\cellcolor[gray]{0.8}15&\cellcolor[gray]{0.8}1\\
         \hline
     Alg(l-1,n-w,n-w+1)&   $\tilde{N}$ &\cellcolor[gray]{0.8}0&\cellcolor[gray]{0.8}0&\cellcolor[gray]{0.8}0&455&1365&3003&4555&5352&5256&4266&2814&\cellcolor[gray]{0.8}1365&\cellcolor[gray]{0.8}455&\cellcolor[gray]{0.8}105&\cellcolor[gray]{0.8}15&\cellcolor[gray]{0.8}1\\
         \hline
      Alg(l,n-w,n-w+1)&  $\tilde{N}$ &\cellcolor[gray]{0.8}0&\cellcolor[gray]{0.8}0&\cellcolor[gray]{0.8}0&\cellcolor[gray]{0.8}16&1330&2803&4251&5208&5244&4358&\cellcolor[gray]{0.8}2982&\cellcolor[gray]{0.8}1365&\cellcolor[gray]{0.8}455&\cellcolor[gray]{0.8}105&\cellcolor[gray]{0.8}15&\cellcolor[gray]{0.8}1\\
      \hline
         \hline
        &$N_i^{\bot}$& \cellcolor[gray]{0.8}0&\cellcolor[gray]{0.8}0&\cellcolor[gray]{0.8}0&\cellcolor[gray]{0.8}0&\cellcolor[gray]{0.8}0&\cellcolor[gray]{0.8}21&194&782&1772&2443&2114&1187&\cellcolor[gray]{0.8}439&\cellcolor[gray]{0.8}105&\cellcolor[gray]{0.8}15&\cellcolor[gray]{0.8}1\\
         \hline
         Alg(l-1,n-w,n-w+1)&$\tilde{N}$&\cellcolor[gray]{0.8}0&\cellcolor[gray]{0.8}0&\cellcolor[gray]{0.8}0&\cellcolor[gray]{0.8}0&\cellcolor[gray]{0.8}0&189&738&1179&1082&449&0&0&0&\cellcolor[gray]{0.8}105&\cellcolor[gray]{0.8}15&\cellcolor[gray]{0.8}1\\
         \hline
        Alg(l,n-w,n-w+1)&$\tilde{N}$ &\cellcolor[gray]{0.8}0&\cellcolor[gray]{0.8}0&\cellcolor[gray]{0.8}0&\cellcolor[gray]{0.8}0&\cellcolor[gray]{0.8}0&\cellcolor[gray]{0.8}21&646&1191&1227&753&200&34&\cellcolor[gray]{0.8}439&\cellcolor[gray]{0.8}105&\cellcolor[gray]{0.8}15&\cellcolor[gray]{0.8}1\\
\hline
    \end{tabular}
    }
    \caption{Exact coefficients of the 3-by-5 hammock and its dual, and approximations using our algorithm with input points $N_{l-1},N_{n-w},N_{n-w+1}$, as well with input point $N_{l},N_{n-w},N_{n-w+1}.$ }
    \label{tab:aprox3-5}
\end{table}

The results are illustrated in Table \ref{tab:aprox3-5}. The first row in each sub-table represents the exact coefficients. The second rows represent the results obtained by means of our algorithm. We notice that the first four non-zero approximations are rather big compared with the exact values, while the last approximated coefficients are much closer to the exact values. 

Secondly, we have implemented the algorithm with input data $N_{l}, N_{n-w},N_{n-w+1}.$ In this scenario we had to change the systems of equation \eqref{eq:syst21}, \eqref{eq:syst22}. More exactly, the new equations we had to solve are 

\begin{equation}
\left\{
  \begin{array}{ll}
    Al^{2}+Bl+C=N_{l} \\
    A(n-w)^{2}+B(n-w)+C=N_{n-w}\\
    A(n-w+1)^{2}+B(n-w+1)+C=\binom{n}{w-1},
  \end{array}
\right.
\end{equation}

\begin{equation}
\left\{
  \begin{array}{ll}
    A^{\perp}w^{2}+B^{\perp}w+C^{\perp}=N_{w}^{\perp} \\
    A^{\perp}(n-l)^{2}+B^{\perp}(n-l)+C^{\perp}=N_{n-l}^{\perp}\\
    A^{\perp}(n-l+1)^{2}+B^{\perp}(n-l+1)+C^{\perp}=\binom{n}{l-1}.
  \end{array}
\right.
\end{equation}

We notice from the third row in each sub-table of Table \ref{tab:aprox3-5} that using more information (extra coefficients) our algorithm outputs a much finer approximation. Indeed, when computing the error of approximation for the two versions, we have obtained $0.22$ for the version using $N_{l-1}$ and $0.18$ for the version using $N_l$, clearly pointing out the advantage of the second version. This implies that our algorithm can be adapted using more information (extra coefficients) in order to produce better results.  
\begin{figure}
\begin{subfigure}{.45\textwidth}
  \centering
  % include first image
  \includegraphics[width=\linewidth,height=0.8\linewidth]{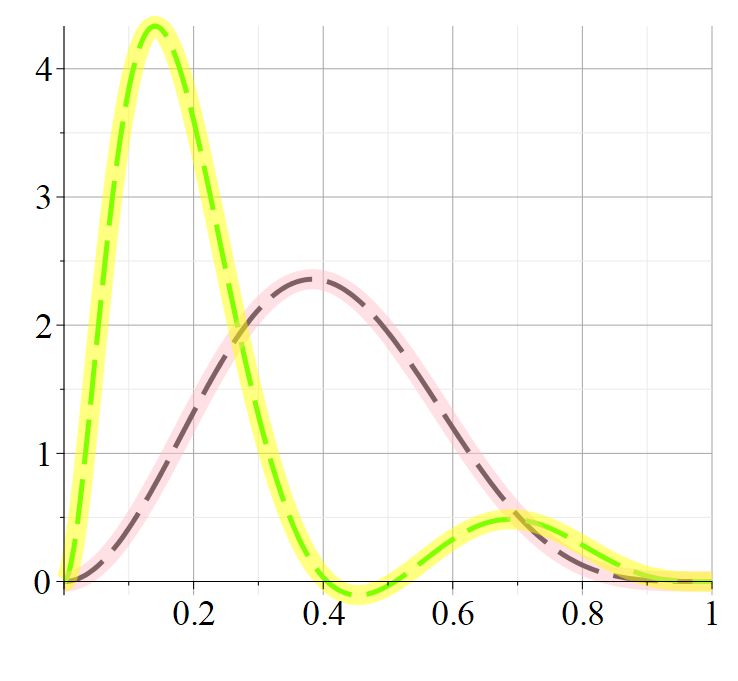}%{deriv1-H35-HD35.eps}  
  \caption{$k=1$}
  \label{fig:sub-first}
\end{subfigure}
\begin{subfigure}{.45\textwidth}
  \centering
  % include second image
  \includegraphics[width=\linewidth,height=0.8\linewidth]{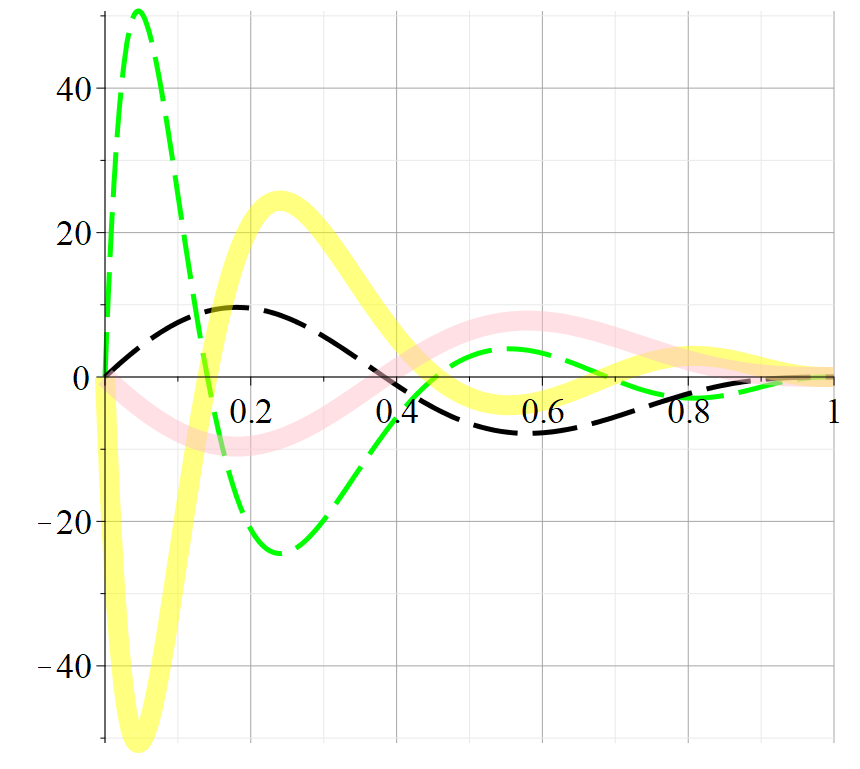}%{deriv2-H35-HD35.eps}  
  \caption{$k=2$}
  \label{fig:sub-second}
\end{subfigure}
\caption{$\frac{d^k}{dp^k}\Rel(\Nss;p)$ (dash black line), $\frac{d^k}{dp^k}\Rel(\Nss^{\bot};1-p)$ (solid pink line), and $\frac{d^k}{dp^k}\ApRel(\Nss;p)$ (dash green line) and $\frac{d^k}{dp^k}\ApRel(\Nss^{\bot};1-p)$ (solid yellow line) for the 3-by-5 hammock network with input data $N_{l-1},N_{n-w},N_{n-w+1}.$}
\label{fig:derivAprel_l_1}
\end{figure}

\paragraph{Properties of the approximated polynomials}
The approximated polynomials were also computed and some of the theoretical results were verified. Indeed, when we investigate the shape properties of the approximated polynomials we notice that they preserve the complementary properties. Of the properties we choose to illustrate here is the behavior of the derivatives. In Figure \ref{fig:derivAprel_l_1} we plot the derivatives of the approximated polynomials using the variant of the algorithm with input data $N_{l-1}, N_{n-w},N_{n-w+1}$, and in Figure \ref{fig:derivAprel_l} those obtained with input data $N_{l}, N_{n-w},N_{n-w+1}.$ We plot the first two derivatives, and observe the following:
\begin{itemize}
    \item $\frac{d}{dp}\ApRel(\Nss;p)=\frac{d}{dp}\ApRel(\Nss^{\bot};1-p)$;
    \item $\frac{d^2}{dp^2}\ApRel(\Nss;p)=-\frac{d^2}{dp^2}\ApRel(\Nss^{\bot};1-p)$.
\end{itemize}

%\newline
\begin{figure}
\begin{subfigure}{.45\textwidth}
  \centering
  % include third image
  \includegraphics[width=\linewidth,height=0.8\linewidth]{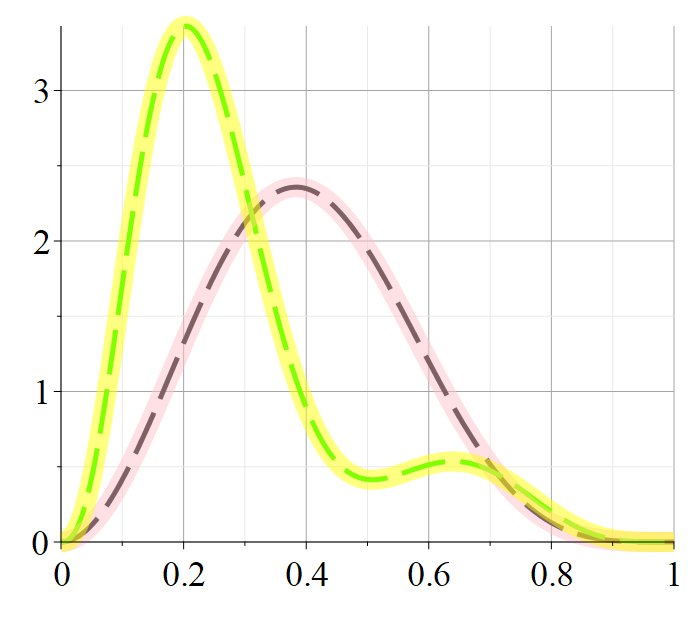}%{deriv3-H35-HD35.eps}  
  \caption{$k=1$}
  \label{fig:sub-third}
\end{subfigure}
\begin{subfigure}{.45\textwidth}
  \centering
  % include fourth image
  \includegraphics[width=\linewidth,height=0.8\linewidth]{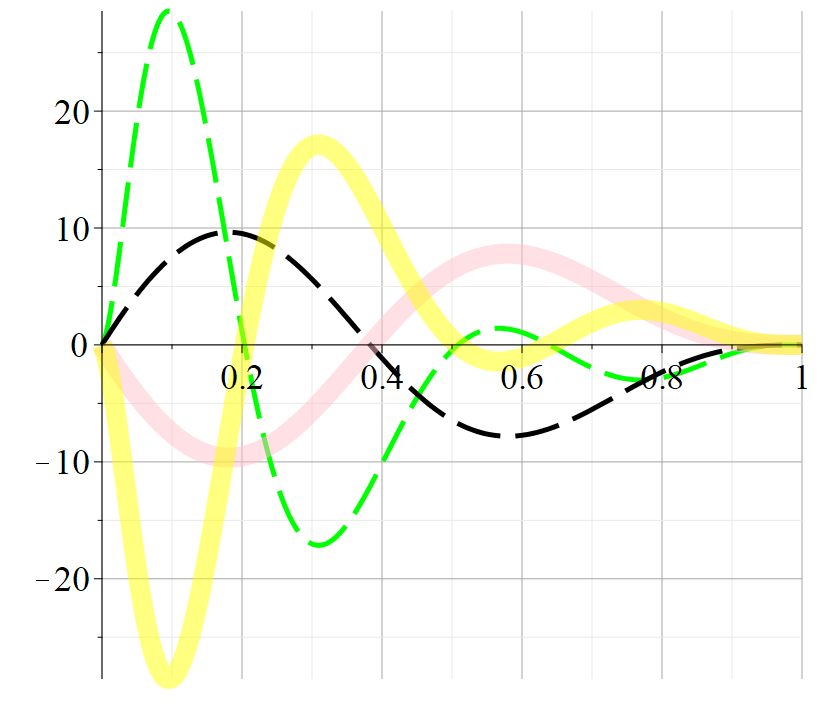}  
  \caption{$k=2$}
  \label{fig:sub-fourth}
\end{subfigure}
\caption{$\frac{d^k}{dp^k}\Rel(\Nss;p)$ (dash black line), $\frac{d^k}{dp^k}\Rel(\Nss^{\bot};1-p)$ (solid pink line), and $\frac{d^k}{dp^k}\ApRel(\Nss;p)$ (dash green line) and $\frac{d^k}{dp^k}\ApRel(\Nss^{\bot};1-p)$ (solid yellow line) for the 3-by-5 hammock network with input data $N_{l},N_{n-w},N_{n-w+1}.$}
\label{fig:derivAprel_l}
\end{figure}

\paragraph{The case of self-dual networks} As our method is using some extra information provided by means of duality, we have simulated the case of self-dual networks. We have considered the 5-by-5 hammock network, case for which we have $N_k+N_{n-k}=\binom{n}{k}.$ In Figure \ref{fig:derivApre5-5} we have plot the first derivatives of the networks and of the approximations.

\begin{figure}[!h]
\begin{subfigure}{.32\textwidth}
  \centering
  % include first image
  \includegraphics[width=\linewidth,height=0.7\linewidth]{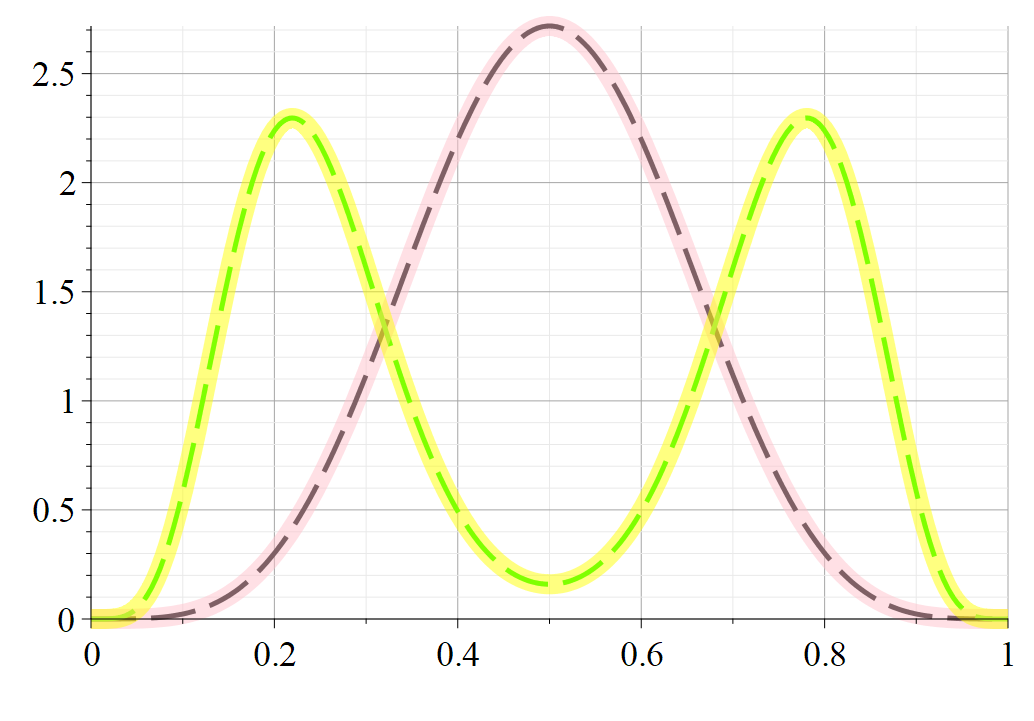}%{deriv1-H35-HD35.eps}  
  \caption{$k=1$}
  \label{fig:5sub-first}
\end{subfigure}
\begin{subfigure}{.32\textwidth}
  \centering
  % include second image
  \includegraphics[width=\linewidth,height=0.7\linewidth]{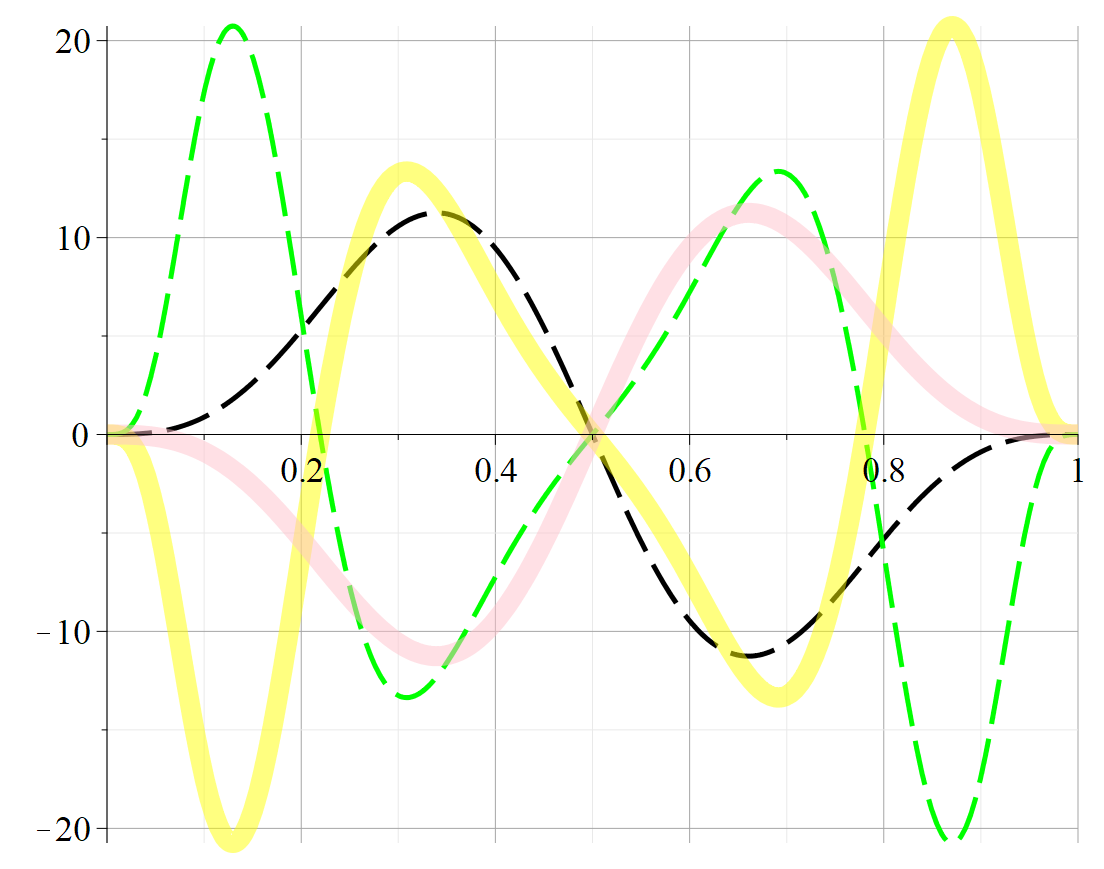}%{deriv2-H35-HD35.eps}  
  \caption{$k=2$}
  \label{fig:5sub-second}
\end{subfigure}
\begin{subfigure}{.32\textwidth}
  \centering
  % include second image
  \includegraphics[width=\linewidth,height=0.7\linewidth]{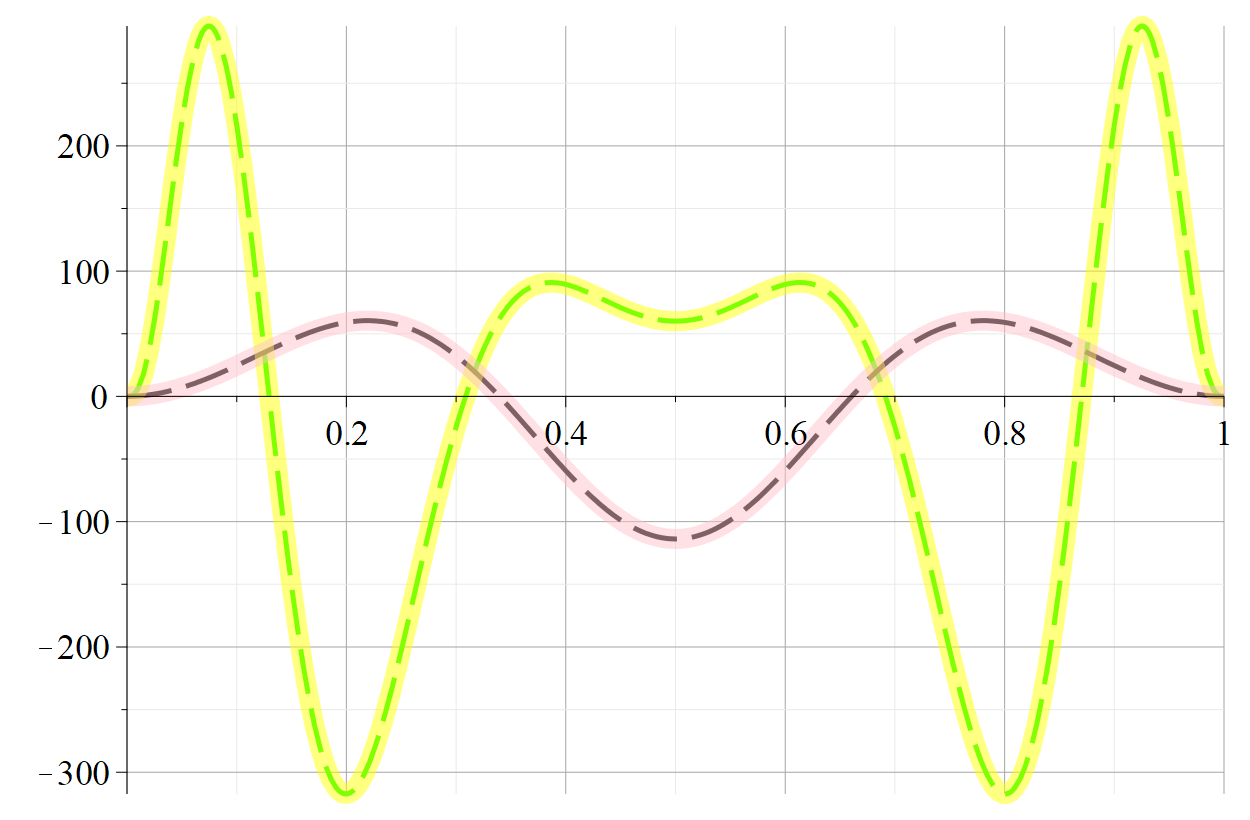}%{deriv2-H35-HD35.eps}  
  \caption{$k=3$}
  \label{fig:5sub-third}
\end{subfigure}
\caption{$\frac{d^k}{dp^k}\Rel(\Nss;p)$ (dash black line), $\frac{d^k}{dp^k}\Rel(\Nss^{\bot};1-p)$ (solid pink line), and $\frac{d^k}{dp^k}\ApRel(\Nss;p)$ (dash green line) and $\frac{d^k}{dp^k}\ApRel(\Nss^{\bot};1-p)$ (solid yellow line) for the 5-by-5 hammock network.}
\label{fig:derivApre5-5}
\end{figure}

Both the reliability polynomial as well as its approximant satisfy $\frac{d^{2k}}{dp^{2k}}\Rel(\Nss;0.5)=\frac{d^{2k}}{dp^{2k}}\ApRel(\Nss;0.5)=0.$ This mainly comes from the fact that $\ApRel(\Nss;0.5)=\Rel(\Nss;0.5)=0.5.$ In this case, the error of approximation is lower than $0.21.$ Notice than, even if less information is used for the particular case of self-dual networks, the error of approximation is comparable with the case of smaller networks such as the 3-by-5 hammock. This fact points out that our method could eventually be used in the case of some larger networks.  
\section{Conclusions}\label{sec:conclusion}
In this paper we have studied the shape properties of the reliability polynomials of two dual networks. Mutual behaviour of these polynomials, referring to high level convexity, inflection points, extremal properties of coefficients functions are studied. The research from \cite{CD2020} and \cite{CD2021} is developed on new bases, taking into account some requirements of shape preserving. We have proposed a technique for simultaneously approximating the reliability polynomials of two dual MMNs, choosing the input data based on their shape properties, in order to obtain results preserving some shapes. Our simulations point out that quadratic splines approach gives much better results than other types of approximation operators, if shape criteria are taken into account together with the size of the error and the complexity of computation. Some possibilities of improving the output are discussed.

\medskip
\section*{Acknowledgments}
V.-F. Dr\u{a}goi was supported by a grant of the Romanian Ministry of Education and Research, CNCS-UEFISCDI, project number PN-III-P4-ID-PCE-2020-2495, within PNCDI III -- \textit{{ThUNDER\textsuperscript{2}} = Techniques for Unconventional Nano-Designing in the Energy-Reliability Realm.}

\vspace{2cc}
\bibliographystyle{plain}
\bibliography{reliability1}

\end{document}